\documentclass{article}

\usepackage{geometry}
\usepackage[round]{natbib}

\usepackage{mathtools}
\usepackage{amssymb}
\usepackage{amsthm}

%%Graphs, colors,..
\usepackage{tikz}
\usepackage{xcolor,graphicx}
%pour la numérotation des environnements {enumerate} 
\usepackage{enumitem}
\usepackage{placeins}  % Forces placement of supp figures
\usepackage{booktabs}  % Better looking tables
\usepackage[labelfont=bf,font=small,width=.9\textwidth]{caption}  % Better looking captions

\usepackage[hidelinks]{hyperref}
\usepackage{bookmark}

\newcommand{\diff}{\mathrm{d}}
\newcommand{\E}{\mathbb{E}} 
\renewcommand{\P}{\mathbb{P}}
\newcommand{\R}{\mathbb{R}}
\newcommand{\e}{\mathrm{e}}
\newcommand{\indic}{\mathbf{1}}

\definecolor{couleurrouge}{HTML}{cf021b}
\definecolor{couleurbleu}{HTML}{4988d5}

\DeclarePairedDelimiter{\Angle}{\langle}{\rangle}
\DeclarePairedDelimiter{\abs}{|}{|}
\renewcommand{\angle}{\Angle}

\DeclareMathOperator{\diag}{Diag}

\newtheorem{theorem}{Theorem}

\newtheorem{proposition}{Proposition}

\allowdisplaybreaks

% Author management
\usepackage{authblk}

\newlength{\affilskip}
\setlength{\affilskip}{5pt}

\author[1]{Félix Foutel-Rodier}
\author[2]{Arthur Charpentier}
\author[2]{Hélène Guérin}

\affil[1]{
    Department of Statistics, University of Oxford
    \vspace{\affilskip}
}
\affil[2]{
    Département de Mathématiques, Université du Québec à Montréal
    \vspace{\affilskip}
}

\title{Optimal Vaccination Policy to Prevent Endemicity\\A Stochastic Model}

\begin{document}

\maketitle

\begin{abstract}

We examine here the effects of recurrent vaccination and waning immunity
on the establishment of an endemic equilibrium in a population. An
individual-based model that incorporates memory effects for
transmission rate during infection and subsequent immunity is
introduced, considering stochasticity at the individual level. By
letting the population size going to infinity, we derive a set of
equations describing the large scale behavior of the epidemic.
The analysis of the model's equilibria reveals a criterion for the
existence of an endemic equilibrium, which depends on the rate of
immunity loss and the distribution of time between booster doses. The
outcome of a vaccination policy in this context is influenced by the
efficiency of the vaccine in blocking transmissions and the distribution
pattern of booster doses within the population. Strategies with evenly
spaced booster shots at the individual level prove to be more effective
in preventing disease spread compared to irregularly spaced boosters, as
longer intervals without vaccination increase susceptibility and
facilitate more efficient disease transmission. We provide an expression
for the critical fraction of the population required to adhere to the
vaccination policy in order to eradicate the disease, that resembles a
well-known threshold for preventing an outbreak with an imperfect
vaccine. We also investigate the consequences of unequal vaccine access
in a population and prove that, under reasonable assumptions, fair vaccine
allocation is the optimal strategy to prevent endemicity.

\bigskip
\noindent {\bf Keywords}: 
age-structured model; 
endemicity; 
waning immunity; 
heterogeneous vaccination; 
varying infectiousness and susceptibility;
mitigation; 
non-Markovian model; 
recurrent vaccination.

\end{abstract}

\section{Introduction}

In epidemiology, a disease is called endemic if it persists in a
population over a long period of time. Many diseases are endemic in some
parts of the world, including for instance malaria and tuberculosis
\citep{hay2009world, oliwa2015tuberculosis}, and several studies have
proposed that endemicity is a likely outcome for the recent COVID-19
epidemic \citep{antia2021transition, lavine2021immunological}, as is
currently the case for other human coronavirus-induced diseases
\citep{su2016epidemiology}. The persistence of a disease in a population
can incur a large cost for society and endemic diseases are responsible
for a large share of the deaths from communicable diseases every year.
Understanding the mechanisms underlying the establishment of such an
endemic state and how to control it is therefore of great public health
importance. Prophylactic vaccination, when available, is a common and
efficient way to mitigate the spread of diseases
\citep{plotkin2005vaccines, rashid2012vaccination}. If the vaccine blocks
part of the transmissions, a high enough vaccine coverage can prevent
self-sustained transmissions in the population, leading to a so-called
herd immunity \citep{anderson1985vaccination, fine2011herd,
randolph2020herd}. This phenomenon has been the subject of a large body
of work in the mathematical modeling literature, aimed at informing
policy-makers on the effectiveness of a vaccination campaign and at
developing a theoretical understanding of the epidemiological
consequences of vaccination. An important achievement of these studies is
the derivation of an expression for the critical vaccine coverage
required to eradicate a disease, under various scenarios of increasing
complexity \citep{anderson1982directly, farrington2003vaccine,
magpantay2017vaccine, delmas2022infinite}. However, the bulk of this work
pertains to vaccines providing life-long (or slowly waning) immunity and
administrated at birth or at a single point in time.
Although these assumptions might represent adequately many situations 
(including for instance childhood diseases), infection by some pathogens
and vaccines are known to provide no or temporary immunity
\citep{vynnycky1997natural, rts2015efficacy, stein2023past}. An
important motivating example for our work is the recent COVID-19
epidemic, for which reinfections after either primary infection or
vaccination have been reported \citep{stein2023past}, and for which direct
measurements of several components of adaptive immunity suggest that part
of it is waning \citep{shrotri2022spike, lin2022longitudinal}. The
understanding of the impact of vaccination under such short-lived
immunity remains limited and motivates further theoretical developments.

In this work, we consider a pathogen for which a vaccine that blocks
transmissions is available but with an immunity that wanes with time,
both for individuals infected and vaccinated. Although our main motivation is
COVID-19, we consider a generic disease with these two features.
As the immunity conferred by the vaccine is temporary, the effect of a
single vaccination rapidly fades and herd immunity can only be achieved
(and thus an endemic state prevented) if individuals are vaccinated
recurrently \citep{randolph2020herd}. However, even recurrent vaccination
might fail to provide herd immunity. Under recurrent vaccination, the
level of immunity in the population is shaped by two antagonistic forces:
boosting due to vaccine injections and re-exposure to the pathogen, and
waning due to decay in circulating antibody levels and/or memory cells. If
vaccination is too scarce or immunity decays too rapidly, the
vaccine-induced immunity might not block enough transmissions to prevent
the disease from spreading in the population and reaching endemicity.
What drives the outcome of a vaccination policy is therefore a complex
interplay between the transmissibility of the disease, the waning of the
immunity (which sets up the time scale after which reinfection can
occur) and the frequency of immune boosting by vaccines. 
We investigate this effect by constructing an epidemic model that
incorporates both waning immunity and recurrent vaccination, and by 
analysing how these two components interact to determine the long-term
establishment of the disease. 

In standard SIR-type models, waning immunity can be modeled by letting
the infected individuals go
back to a susceptible state, either directly after the infection as in
the SIS model, or after a temporary immune period as in the SIRS model
\citep{brauer2019mathematical}. Extensions of these models where the
duration of the immune period is fixed or has a general distribution have
also been proposed, for instance in \cite{hethcote1981nonlinear,
cooke1996analysis, taylor2009sir, bhattacharya2012time}. In such models,
immunity is lost instantaneously as individuals go from being fully
protected (in the $R$ state) to being fully susceptible to the disease
(in the $S$ state). Some studies consider a more gradual loss of immunity 
by adding one or several intermediate compartments with partial immunity,
often denoted by $W$ (for waning) \citep{lavine2011natural,
carlsson2020modeling}. In our work, we will model the decay of immunity by
tracking for each individual a \emph{susceptibility} giving the
probability of being reinfected upon exposure to the pathogen. Waning
immunity is modeled by having the susceptibility increase with time
following an infection or vaccination, with no further assumption. This
approach can account for the situations described above, where each
individual is in one of finitely many immune states ($R$, $S$, $W$), but
also for a continuous loss of immunity. The idea of modeling a
susceptibility dates back to the endemic models of Kermack and McKendrick
\citep{kermack32, kermack33}, see also \cite{inaba01, breda12} for modern
formulations, and is also reminiscent of existing works describing
immunity as a continuous variable \citep{white1998microparasite,
diekmann2018waning, barbarossa2015immuno, martcheva2006epidemic}.
Modeling immunity through an abstract susceptibility is a
phenomenological approach, but mechanistic approaches have also been
proposed. These require to model explicitly for each individual some
components of the immune system (T-cells, B-cells, antibodies, cytokines)
and their interaction with the pathogen, as for instance in
\cite{heffernan2008host, heffernan2009implications, goyal2020potency,
neant2021modeling}. Such an approach is both more realistic and opens the
possibility of being calibrated using clinical data
\citep{lin2022longitudinal}, but adds a new layer of complexity (the
within-host dynamics) which can be cumbersome for theory purpose. We will
think of our susceptibility as aggregating the effect of this complicated
within-host process.

The effect of immune boosting through recurrent vaccination has also
drawn attention from modelers \citep{arino2003global, lavine2011natural,
carlsson2020modeling, leung2018infection}. Let us note that, in the
vaccination strategy that we consider, individuals are vaccinated
recurrently during their lifetime, each at different moments. The name
``continuous vaccination'' has also been proposed for this type of
vaccination \citep{liu2008svir, li2011sir}. This is different for instance
of the ``pulse vaccination strategy'', considered in \cite{agur1993pulse}
in the context of measles, and that has received further attention
\citep{liu2008svir, li2011sir}. In this strategy, at several fixed moments
a fraction of the population is vaccinated, all individuals in a given
vaccination pulse receiving their dose at the same time. Typically,
recurrent boosting is modeled by letting individuals get vaccinated at a
given rate (that can depend on age or other factors) in which case they
are moved to a compartment with a reduced susceptibility. A notable
difference with our work is that, in our model, the vaccination rate
depends on the time elapsed since the previous vaccination. This reflects
the fact that, at the microscopic level, we let the period of time
between two consecutive vaccinations have a general (non-exponential)
distribution. We have several motivations for relaxing the usual constant
rate assumption. First, the resulting dynamics is much richer and
complex. It encompasses more realistic situations that cannot be modeled
using a constant rate, for instance the enforcement of a minimal duration
between two vaccine doses, or the existence of a typical duration between
two doses, leading to a peak in the distribution of this duration.
Second, summarizing the effect of vaccination and waning immunity by a
small number of parameters (the transition rates between the various
immune compartments) obscures the role played by the exact shapes of the
immunity decay and of the distribution between vaccine doses and leads to
quite opaque expressions. For instance, if immunity following vaccination
first plateaus and is then lost rapidly around a typical time, we expect
a population where boosting occurs right before this time to build a much
stronger immunity than if boosting occurred right after this time,
although there is only a minimal variation in the overall vaccination
rate. This type of effect cannot be studied by assuming that all
durations have exponential distributions. From a mathematical point of
view, this more general model requires to work with partial differential
equations describing the ``age structure'' of the population rather than
with more usual sets of ordinary differential equations. It is always
interesting to note that similar age-structured epidemic models were used
as early as in the foundational work of Kermack and McKendrick
\citep{kermack1927contribution}, and that the compartmental SIR model was
only introduced as a particular case of this more general dynamics.

Our model has one last specificity compared to more classical
approaches based on compartments. It is formulated as an
individual-based stochastic model from which we derive a set of
deterministic equations describing its scaling limit (as the population
size goes to infinity). Modeling the population at the microscopic level
rather than directly at the continuum gives a better understanding of the
hypotheses underlying the model, as well as a more transparent
interpretation of the parameters as individual quantities. Moreover, in
our model, part of this stochasticity will remain in the limit (through a
conditioning term) which could have been easily missed if
this convergence step was not carried out. Similar laws of large numbers have
been obtained frequently in the probabilistic literature on population
models \citep{kurtz1981approximation, oelschlager1990limit,
fournier2004microscopic}, in particular in an epidemic context
\citep{clemenccon2008stochastic, barbour2013approximating,
britton2019stochastic, pang2021functional, foutel2022individual}.
Our model draws inspiration from recent works on similar
non-Markovian epidemics \citep{pang2021functional, forien2021epidemic,
foutel2022individual, duchamps2021general}, and the limiting equations we
obtain are connected to classical time-since-infection models in epidemiology
\citep{diekmann1977limiting, diekmann1995legacy}.

Lastly, we would like to acknowledge the work of \cite{forien22}, who
consider a model very similar to ours (but without explicit vaccination),
derive rigorously the scaling limit of the epidemic, and give criteria
for the existence and asymptotic stability of an endemic equilibrium. We
emphasize that, despite the striking similarities, the two models were
formulated independently and most of the results presented in our work
were obtained before that in \cite{forien22} were made available.
Moreover, the main aim of our work is to draw public health insights from
our model, letting sometimes mathematical rigour aside, whereas
that of \cite{forien22} is mathematically much more accomplished and is
targeted to an audience of probabilists. We believe that the two
approaches offer complementary perspectives on the problem of waning
immunity and endemicity.

The rest of this article is organized as follows. We start by describing   
our stochastic model and its large population size limit in
Section~\ref{S:methods}. Then, in Section~\ref{S:publichealth}, we study
the long-time behavior of the limiting equations and provide a simple
criterion for the existence of an endemic equilibrium. In the following two
sections, we examine the dependence of this criterion on the vaccination
parameters to draw some public health insights from our model. We start
with a general discussion in Section~\ref{S:impactVac}, and consider two
more specific applications in Section~\ref{S:twoGroups}, which require us
to make a straightforward extension of our model to multiple groups. The well-posedness of the main PDE, introduced in Section~\ref{S:methods}, is proved in Section~\ref{sec:PDE}. Finally, a discussion on the model, the hypotheses and results is provided in Section~\ref{sec:discussion}.

\section{The model}
\label{S:methods}

\subsection{Model description}
\label{SS:model}

\paragraph{The dynamics without vaccination.}
We consider the spread of a disease in a closed population of fixed size
$N$, started at some reference time $t=0$ at which the state of the
epidemic is known. Each individual in the population is characterized by
two random quantities that change through time: its infectiousness,
giving the rate at which it transmits the disease, and its
susceptibility, corresponding to the probability that it gets reinfected
upon contact with an infected individual. Individuals are labeled by
$i\in\{1, \dots, N\}$, and the infectiousness of individual $i$ at time
$t$ is denoted by $\lambda^N_i(t)$ while its susceptibility is denoted by
$\sigma^N_i(t)$. We start by describing the dynamics of the epidemic in the
absence of vaccination, and then indicate how vaccines are included to
the model.  The typical evolution of the susceptibility and
infectiousness of an individual is represented in Figure~\ref{fig:individualTraj}.

Consider a focal individual $i$. In the absence of vaccination, it goes
repeatedly through two states. An infectious state (denoted by $I$),
where it cannot be infected but can spread the disease ($\lambda^N_i(t)
\ge 0$ and $\sigma^N_i(t) = 0$) and a susceptible state (denoted by $S$),
where it does not spread the disease but can be reinfected
($\lambda^N_i(t) = 0$ and $\sigma^N_i(t) \ge 0$). The transition
from $S$ to $I$ corresponds to the individual being infected and that 
from $I$ to $S$ to it recovering from the disease. Recovering might
confer partial or even full immunity. As an $S$ individual might be
partially immune (if $\sigma^N_i(t) < 1$), it is important to note that
our definition of a susceptible individual is different from the usual
one in differential equation models.

% It is important to note
% that, since susceptibility and infectiousness are varying quantities, the
% meaning of the $I$ and $S$ states is different from that in SIR models:
% an individual in the $I$ state can have $\lambda^N_i(t) = 0$ and thus not
% be infectious, and an individual in the $S$ state can have $\sigma^N_i(t)
% = 0$ and be protected from reinfection.

Upon infection, say at time $\tau$, individual $i$ enters the
$I$ state and samples a random function $\lambda \colon [0, T_I) \to [0,
\infty)$ according to a given distribution $\mathcal{L}_\lambda$. The
length $T_I$ of the domain of $\lambda$ is random, and we think of it as
being part of the definition of $\lambda$. The individual
remains in the $I$ state for a time period of length $T_I$, after
which it \emph{recovers} from the disease and moves to state $S$.
During its infectious period, it cannot get reinfected, and its
infectiousness is described by the function $\lambda$, that is, 
\[
    \forall a < T_I,\quad \lambda^N_i(\tau+a) = \lambda(a), 
    \quad \sigma^N_i(\tau+a) = 0.
\]

Once an individual has recovered from the disease, here at time
$\tau'=\tau+T_I$, it cannot spread the disease anymore and acquires an
immunity against reinfection that wanes. It enters the $S$ state. To
model that immunity is waning, the focal individual samples 
an independent random susceptibility $\sigma \colon [0, \infty) \to [0,
1]$ according to another given distribution $\mathcal{L}_\sigma$ on the
set of non-decreasing functions. We define
\[
    \forall a \ge 0,\quad \lambda^N_i(\tau'+a) = 0,
    \quad \sigma^N_i(\tau'+a) = \sigma(a).
\]
The susceptibility gives the probability to be infected upon exposure to
the disease. More precisely, individual $i$ gets reinfected at time $t$
at rate $\sigma^N_i(t)\Lambda^N(t)$  with $\Lambda^N$ the force of infection of the disease defined by
\begin{equation}\label{eq:infect-rate}
    \forall t \ge 0,\quad \Lambda^N(t) \coloneqq \frac{1}{N} \sum_{i=1}^N
    \lambda^N_i(t).
\end{equation}
When such an event occurs, individual $i$ goes back to the $I$ state
and we reproduce the above two steps independently. Note that since
$\sigma^N_i(t) = 0$ when $i$ is in state $I$, only susceptible individuals
can be reinfected.
The interpretation of the latter expression is that each infectious
individual, at a rate proportional to its infectiousness, makes a contact
targeted to an individual chosen uniformly in the population. This
contact leads to an infection with a probability given by the
susceptibility of the target individual. 

\begin{figure}
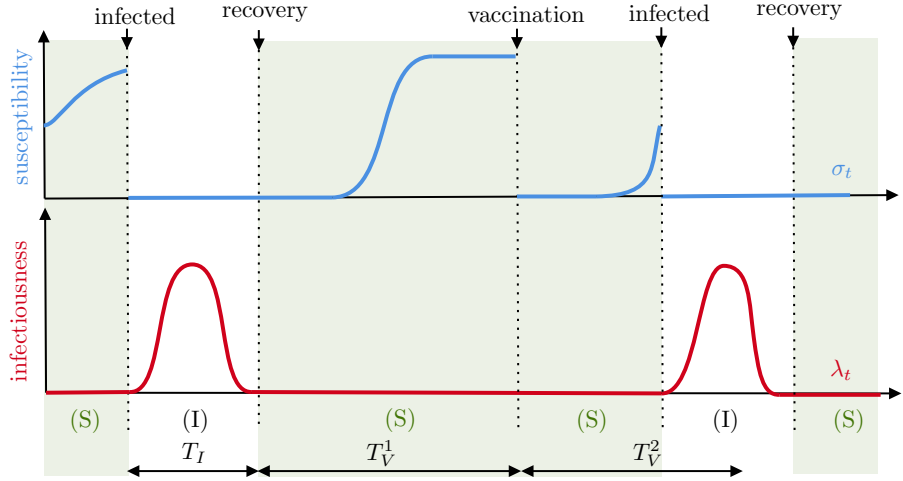
%[!h]
\centering
\include{tik-fig1}
\caption{Typical evolution of the
    susceptibility $\sigma$ (in \textcolor{couleurbleu}{blue}, on top) and  infectiousness $\lambda$ (in \textcolor{couleurrouge}{red}, below) of an individual.}
    \label{fig:individualTraj}
\end{figure}

\paragraph{Recovering the SIR, SIS, and SIRS models.} Let us rapidly illustrate how to
recover the usual compartmental models from the previous definitions.
Suppose that $\lambda$ and $\sigma$ are given by
\[
    \forall a < T_I,\quad \lambda(a) = \beta,
    \qquad 
    \forall a \ge 0,\quad \sigma(a) = \indic_{\{ a \ge T_R\}}
\]
where $T_I$ is exponentially distributed with parameter $\gamma$, and
$T_R$ is a random variable giving the length of the immune period. Then,
infectious individuals yield new infections at a constant rate $\beta$,
and stop being infectious (that is, recover) at rate $\gamma$. After
having recovered, an individual is completely immune to reinfections
($\sigma(a) = 0$) for a random duration $T_R$, after which it becomes
fully susceptible ($\sigma(a) = 1$). Depending on the distribution of
$T_R$, we can recover either the SIR, SIS or SIRS model. If $T_R =
\infty$ a.s., individuals are permanently immune following an infection
and our model becomes a stochastic version of the SIR model. If $T_R = 0$
a.s., individuals build no immunity following an infection, and this
corresponds to a stochastic SIS model. Finally, a stochastic SIRS model
can be obtained by letting $T_R$ be exponentially distributed. Obviously,
our model can account for much more complex situations, where
infectiousness varies during the infectious period, and immunity is
gradually lost following infection.

\paragraph{Adding vaccines.}
The previous rules describe the dynamics of the epidemic in the absence
of vaccination. We model vaccines by assuming that vaccination has the
same effect as (natural) immunization by the disease:
upon vaccination the susceptibility of an individual is ``reset'' to a
new independent random curve $\sigma'$ with law $\mathcal{L}_\sigma$, as
if it had entered the $S$ state after having recovered from the disease.
We further assume that infectious individuals are not vaccinated, and
that susceptible individuals are vaccinated recurrently according to a
renewal process until they are reinfected. In particular, it might occur
that an individual fully immune to the disease gets vaccinated.

More formally, if $\tau''$ denotes a (random) time at which $i$ is vaccinated, we
set
\[
    \forall a \ge 0,\quad \lambda^N_i(\tau''+a) = 0,
    \quad \sigma^N_i(\tau''+a) = \sigma'(a),
\]
for an independent random variable $\sigma'$ with distribution
$\mathcal{L}_\sigma$. After this event, a random independent
duration $T_V$ is sampled according to some law $\mathcal{L}_V$, which
gives the waiting time until individual $i$ receives its next vaccine
dose. If $i$ has not been reinfected by time $\tau'' + T_V$, we reiterate
the above two steps (resampling the susceptibility and a future
vaccination time) independently. This process goes on until the
individual is reinfected and goes back to the $I$ state.

The dynamics of the population can be obtained by carrying out the previous
steps altogether for each of the $N$ individuals. Every time an
individual is affected by an event (recovery, infection, vaccination), it
samples its new susceptibility or infectiousness independently of all
other individuals and of the past dynamics, according to $\mathcal{L}_\lambda$ 
in case of an infection or $\mathcal{L}_\sigma$ otherwise.

\subsection{Mathematical construction of the model}
\label{SS:formalModel}

We now present a formal description of the model.

\paragraph{Initial condition.} 
We suppose that the epidemic has been spreading for a long enough time
that all individuals have been infected or vaccinated at least once at
$t=0$. The initial condition we consider could easily be modified to
encompass more general scenarios. We assume that a fraction $I_0$ of
individuals are infected at $t=0$, and that $I_0 \in (0,1)$.
We assign to
each individual an initial state ($I$ or $S$), age, susceptibility, and
infectiousness independently in the following way. Consider a focal
individual $i \in \{1,\dots, N\}$. 

\begin{itemize}
\item We record the initial state of individual $i$ as a random variable $C_{i,0} \in \{S, I\}$ such
that 
\[
    \P(C_{i,0} = I) = 1 - \P(C_{i,0} = S) = I_0.
\]
\item Individuals are assigned an initial age $A_i(0)=-\tau_{i,0}$ such that the age is distributed according to a probability density $h_I$ for
an $I$ individual and to a probability density $h_S$ for an $S$ individual. 
\begin{itemize}
    \item[{\footnotesize$\circ$}] If individual $i$ is in the $I$ state:
        Conditional on $A_i(0)$, it is assigned an initial infectiousness
        $(\lambda_{i,0},T_{I,i,0})$ distributed as $\mathcal{L}_\lambda$ 
        conditional on $T_I > -\tau_{i,0}$, so that the individual
        remains infectious at time $t=0$. Before time $T_{I,i,0}$, the
        age of $i$ is $A^N_i(t) =
        A_i(0)+t$ and its infectiousness is
        \[
            \forall t < T_{I,i,0},\quad \lambda^N_i(t) = \lambda_{i,0}(t+A_i(0)),
            \quad \sigma^N_i(t) = 0.
        \]
        After time $T_{I,i,0}$, individual $i$ enters the $S$ state and
        follows the dynamics described above.
    \item[{\footnotesize$\circ$}] If individual $i$ is in the $S$ state:
        Conditional on $A_i(0)$, it is assigned two independent
        variables: a susceptibility $\sigma_{i,0}$ distributed as
        $\mathcal{L}_V$ and an independent initial vaccination time
        $T_{V,i,0}$, with law $\mathcal{L}_V$ conditional on $T_V >
        A_i(0)$. Again, the age and susceptibility of $i$ until time
        $T_{V,i,0}$ or until it gets infected are $A^N_i(t) = A_i(0) + t$
        and
        \[
            \sigma^N_i(t) = \sigma_{i,0}(t+A_i(0)).
        \]
\end{itemize}
\end{itemize}
All these variables are assigned independently for different individuals.

\paragraph{Spread of the epidemic.} Consider, for each $i$, three
independent i.i.d.\ sequences, $(\lambda_{i,k}, T_{I,i,k};\, k \ge 1)$,
$(\sigma_{i,k};\, k \ge 1)$, $(T_{V,i,k};\, k \ge 1)$ distributed as
$\mathcal{L}_{\lambda}$, $\mathcal{L}_{\sigma}$ and $\mathcal{L}_V$. We
also introduce for each $i$ an auxiliary sequence of independent
exponential random variables $(E_{i,k}; k \geq 1)$ with unit mean,
independent of the previous sequences.

From these random variables and the initial condition, we will construct
for each $i$ two sequences of random variables $(\tau_{i,k};\, k \ge 1)$ and 
$(C_{i,k};\, k \ge 1)$ that represent respectively the time at which $i$
experiences its $k$-th event (infection, recovery, or vaccination), and
its state after this $k$-th event ($I$ or $S$). Assuming that these
variables are constructed, the age $A_i(t)$, the state $C_i(t)$, the
infectiousness $\lambda_i^N(t)$ and the susceptibility $\sigma_i^N(t)$ of
individual $i$ at time $t$ are simply given by 
\begin{align*}
    A_i(t)&=t-\tau_{i,K_i(t)},\quad
    C_i(t)=C_{i,K_i(t)}, \nonumber\\
    \lambda_i^N(t) &= \lambda_{i,K_i(t)}\big(A_i(t)\big) 
    \indic_{\{C_i(t)=I\}},  \nonumber\\
    \sigma_i^N(t) &= \sigma_{i,K_i(t)}\big(A_i(t)\big)
    \indic_{\{C_i(t)=S\}},
\end{align*}
where 
\[
    \forall t \ge 0,\quad K_i(t) = \sup \{ k \ge 0 : \tau_{i,k} < t \}
\]
is the number of events experienced by $i$ at time $t$. We recall that
the force of infection at time $t$ is defined as 
\[
    \Lambda^N(t) = \frac{1}{N} \sum_{i=1}^N \lambda_i^N(t),
\]
with the convention that $\Lambda^N \equiv 0$ for negative times.

Let us now construct $(\tau_{i,k};\, k \ge 1)$ and $(C_{i,k};\, k \ge 1)$
inductively. Suppose $\tau_{i,k}$ and $C_{i,k}$ have been constructed. We
distinguish between two cases. If $C_{i,k} = I$, individual $i$
eventually recovers so that we set $C_{i,k+1} = S$. This recovery occurs
after a period of length $T_{I,i,k}$, and we define $\tau_{i,k+1} = \tau_{i,k} +
T_{I,i,k}$. If $C_{i,k} = S$, the next event experienced by individual
$i$ is either a vaccination or a reinfection. We use $E_{i,k}$
to define the time of reinfection in the absence of vaccination as
\begin{equation} \label{eq:reinfectionTime}
    Z_{i,k} = \inf \Big\{ a \ge 0 : \int_0^a
    \Lambda^N(\tau_{i,k}+u)\sigma_{i,k}(u) \diff u > E_{i,k} \Big\}.
\end{equation}
Note that $Z_{i,k}$ corresponds to the first atom of a Poisson point process  
with random intensity $\Lambda^N(\tau_{i,k} + \cdot) \sigma_{i,k}(\cdot)$.
If $T_{V,i,k} > Z_{i,k}$, the individual gets reinfected before it is
vaccinated. We set $C_{i,k+1} = I$ and $\tau_{i,k+1} = \tau_{i,k} +
Z_{i,k}$. Otherwise, the individual is vaccinated before being
reinfected, and we set $C_{i,k+1} = S$, and $\tau_{i,k+1} = \tau_{i,k} +
T_{V,i,k}$.
%%%%%%%%%%%%%%%%%%%%%

Overall, apart from the initial condition, the dynamics of the epidemic
depends on four parameters: the population size $N$, the distribution of
the infectiousness curve $\mathcal{L}_\lambda$, the distribution of the
susceptibility curve $\mathcal{L}_\sigma$, and the distribution of the
duration between two vaccinations $\mathcal{L}_V$. We will always assume
that $T_I$ and $T_V$ have a density and a finite expectation.
Some realizations of the model are displayed in Figure~\ref{fig:simulationModel}.

\begin{figure}%[!h]
    \centering
    \includegraphics[width=\textwidth]{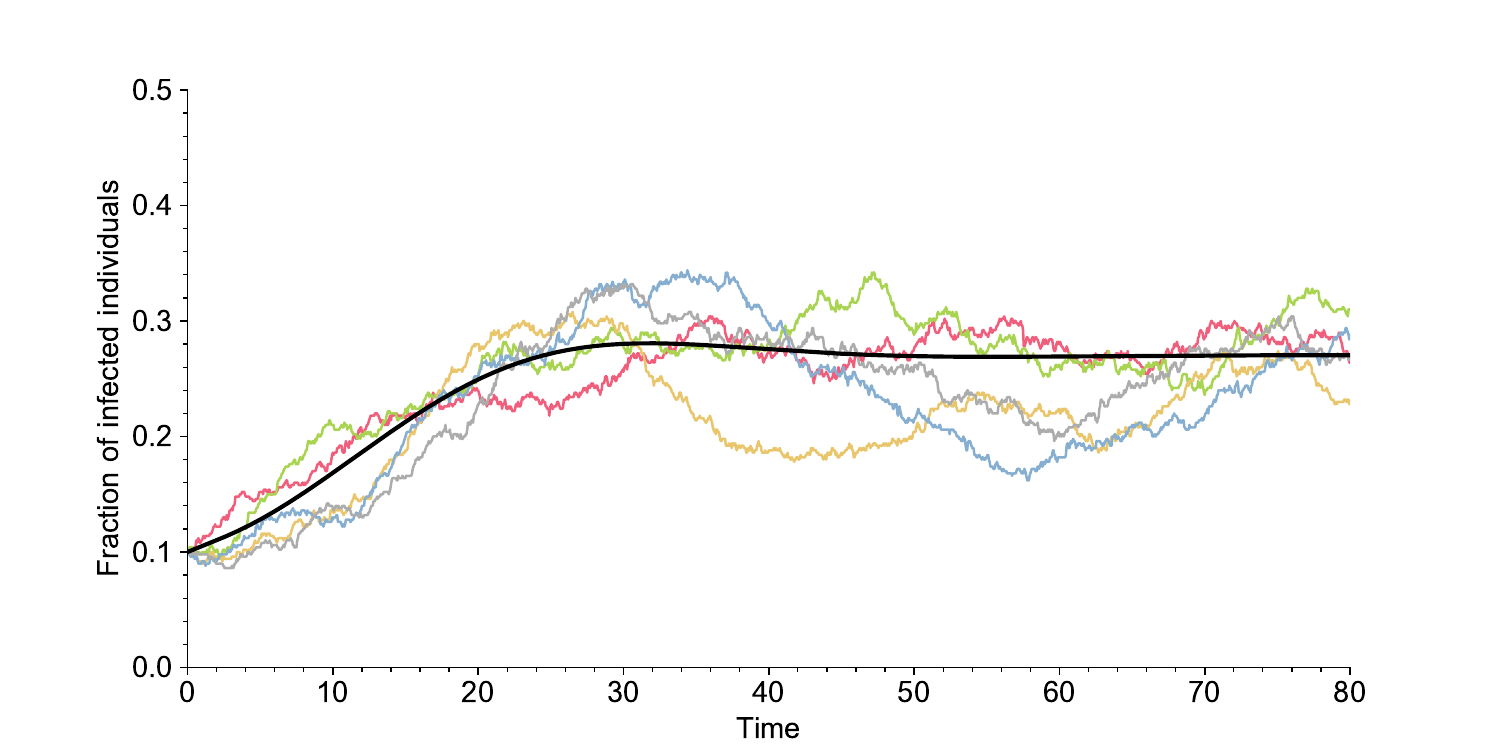}
    \caption{Independent simulations of the model (colored lines) for $N = 500$ 
        and of its deterministic limit, as $N\to\infty$ (solid {\bf black line}). All
            parameter values are given in Table~\ref{tab:parametersValue} (Appendix~\ref{sec:numerical:simulations}). The simulations are
    initialized with a fraction $I_0 = 0.1$ of infectious individuals.}
    \label{fig:simulationModel}
\end{figure}

\paragraph{Population age structure.}
Since the infectiousness and susceptibility are varying with time in our
model, the state of the epidemic is not accurately described by counting
the number of infectious and susceptible individuals. Deriving the large
population size limit of the model requires to record for each individual
a duration, which is the time elapsed since the last event (infection, recovery,
or vaccination) that they experienced. We refer to this duration as the
class age, or simply the age. Thus, the age of an $I$ individual is
the time elapsed since its infection, which is the classical definition of
the age-of-infection in epidemic models \citep{thieme1993may,
inaba2017age}. For an $S$ individual it is the time since its last
vaccination or recovery event, where we recall that an individual
is said to recover when it moves from state $I$ to state $S$.
Recall that $A^N_i(t)$ is the age of individual $i$ at
time $t$, and $C^N_i(t) \in \{I, S\}$ its state. The
relevant quantity that we will study is the age and state structure of
the population, which we encode as the following two point measures on
$[0, \infty)$
\begin{equation} \label{eq:ageStructure}
    \forall t \ge 0,\quad \nu_I^N(t) = \frac{1}{N} \sum_{\substack{i = 1,\dots,N\\C_{i}(t)=I}}
    \delta_{A^N_i(t)}, \quad
    \nu_S^N(t) = \frac{1}{N} \sum_{\substack{i = 1,\dots,N\\C_{i}(t)=S}}\delta_{A^N_i(t)}.
\end{equation}
The measure $\nu^N_I(t)$ (resp.\ $\nu^N_S(t)$) has one atom for each
infectious (resp.\ susceptible) individual at time $t$, with mass $1/N$
and whose location is the age of that individual.

\subsection{Large population size limit}
\label{SS:MKV}

The above stochastic model is too complicated to be studied directly.
Instead, we will identify its large population size limit and use it
as an approximation of the finite population model to study the
efficiency of a vaccination strategy. 
We rely on a classical approach in statistical physics called
\emph{propagation of chaos} (sometimes also referred to as
\emph{molecular chaos}). The general idea is as follows: consider a
stochastic system of $N$ particles, initially independently
distributed, with mean-field interaction between them. Focusing on a
fixed number $k$ of particles, under general assumptions the mean-field
interaction averages out as the size of the system $N$ tends to $\infty$,
and the $k$ particles behave independently in the limit. (The terminology propagation
of chaos refers to the propagation in time of the initial independence
of the particles when $N \to \infty$.) If the system is exchangeable,
this asymptotic independence entails a law of large numbers for the
empirical measure of the system. The limiting distribution usually has a
density characterized as the unique solution of a nonlinear PDE.
We refer the reader to \cite{sznitman89, Meleard1996} for a detailed
description of these concepts in the original context of kinetic theory, as
well as to the recent exhaustive review by \cite{ChaintronDiez22-I,
ChaintronDiez22-II}. For applications of these ideas in biological
contexts, we refer to \cite{chevallier17, Locherbach-Fournier16} for
models of interacting neurons, and finally to \cite{forien22} for a
rigorous approach on a similar epidemiological model to ours.

We now apply these ideas to our model. The $N$ particles correspond to
the $N$ individuals in the population, which interact in a mean-field
way through infections. Since this approach is rather standard in the
probabilistic literature, we only outline the usual arguments here.
Recall the definition \eqref{eq:ageStructure} of the empirical age
distributions in the two compartments at time $t$, $\nu_I^N(t)$ and
$\nu_S^N(t)$. We assume in this section that $\nu_I^N(t)$ (resp.\
$\nu_S^N(t)$) converges (in distribution as random measures) as $N \to
\infty$ to a deterministic limiting measure with density $(I(t,a);\, a
\ge 0)$ (resp.\ $(S(t,a);\, a \ge 0)$). We will identify the equations
fulfilled by these limits. For any continuous bounded test function $f$
we have 
\begin{align*}
    \int_0^\infty I(t,a)f(a)\diff a 
    &= \lim_{N \to \infty} \E[\angle{\nu_I^N(t), f}] \\
    &= \lim_{N \to \infty} \frac{1}{N} \sum_{i=1}^N \E[f(A_i^N(t)) \indic_{\{C_i^N(t)=I\}} ] \\
    &= \lim_{N \to \infty} \E[f(A_1^N(t)) \indic_{\{C_i^N(t)=I\}} ],
\end{align*}
where the last line follows from the exchangeability of the system
and where we have used the standard notation $\angle{\mu, f} = \int f(x)
\mu(\diff x)$. A
similar computation holds for $\nu^N_S(t)$. This shows that the limiting age 
structures of $I$ and $S$ individuals correspond to the limit in
distribution of the age of one typical individual in the population, on
the event that it is in state $I$ or $S$ respectively. Thus, we only need
to understand the dynamics of a single individual, say individual $i=1$,
in the limit $N \to \infty$.

According to the rules of the model, individual $1$ only depends on the
other individuals in the population through its infection rate
$\sigma^N_1(t) \Lambda^N(t)$, with $\Lambda^N$ given by
\eqref{eq:infect-rate}. Assuming that $\Lambda^N(t) \to \Lambda(t)$ as $N
\to \infty$ for some deterministic function $\Lambda \coloneqq
(\Lambda(t);\, t \ge 0)$, the limit in distribution of the age, state,
infectiousness, and susceptibility of individual $1$ is simply obtained
by replacing $\Lambda^N$ with its limit $\Lambda$ in the model
description of Section~\ref{SS:model}. Using a similar notation as in
Section~\ref{SS:model}, let us denote by $(\lambda^\Lambda(t),
\sigma^\Lambda(t), A^\Lambda(t), C^\Lambda(t))$ the limit of the
infectiouness, susceptibility, age and state of individual $i=1$ at time
$t$, as $N \to \infty$. (That is, when $\Lambda^N$ is replaced by
$\Lambda$.) Now, by exchangeability as above
\begin{align*}
    \Lambda(t)
    &= \lim_{N \to \infty} \E[\Lambda^N(t)] 
    = \lim_{N \to \infty} \frac{1}{N} \sum_{i=1}^N \E[\lambda^N_i(t)]\\
    &=  \lim_{N \to \infty}  \E[\lambda^N_1(t)]
    = \E[ \lambda^\Lambda(t) ].
\end{align*}
This puts the consistency constraint on $\Lambda$ that 
\begin{equation} \label{eq:McKeanVlasov}
    \forall t \ge 0,\quad \Lambda(t) = \E[ \lambda^\Lambda(t) ]
\end{equation}
should hold. In the terminology of propagation of chaos, a stochastic
system satisfying \eqref{eq:McKeanVlasov} is called a solution to a
McKean--Vlasov equation, or also a solution to a non-linear equation. It
can be shown (see Proposition~\ref{prop:uniqueness-sol-PDE} in
Section~\ref{sec:proof-wellposedness})  that for our stochastic model, there
exists a unique solution to the McKean--Vlasov equation
\eqref{eq:McKeanVlasov}. We denote it by $\Lambda^*$, and by
$(\lambda^*(t), \sigma^*(t),A^*(t), C^*(t))$ the corresponding quantities. The
following result, that we state without proof, identifies the limit of
the age and class structure of our model to the distribution of $(A^*(t),
C^*(t))$ of the solution to the McKean--Vlasov equation. It has been
proved in \cite{forien22} by making the above heuristic arguments rigorous. 

\begin{theorem}[\cite{forien22}, Theorem~3.2] \label{thm:lawLargeNumbers}
    Suppose that there exists $\lambda_{\max}$ such that $\lambda(a) \le
    \lambda_{\max}$ almost surely for all $a \ge 0$. Then for any $t \ge 0$ we
    have 
    \[
        \lim_{N \to \infty} \nu^N_I(t) = I(t,a) \diff a,\qquad
        \lim_{N \to \infty} \nu^N_S(t) = S(t,a) \diff a
    \]
    in distribution for the topology of weak convergence. Furthermore,
    $(I(t,a);\, a \ge 0)$ is the density of $A^*(t)$ on the event
    $\{C^*(t) = I \}$, and $(S(t,a);\, a \ge 0)$ that on the event
    $\{C^*(t) = S \}$, where $(\lambda^*(t), \sigma^*(t), A^*(t),
    C^*(t);\, t \ge 0)$ is the (unique) solution to the above
    McKean--Vlasov equation \eqref{eq:McKeanVlasov}.
\end{theorem}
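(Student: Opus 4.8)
The plan is to establish this law of large numbers by a synchronous coupling between the $N$-particle system and $N$ independent copies of the limiting nonlinear process, following the standard propagation-of-chaos strategy. Granting existence and uniqueness of the solution $\Lambda^*$ to the McKean--Vlasov equation \eqref{eq:McKeanVlasov} (Proposition~\ref{prop:uniqueness-sol-PDE}), I would first build, on a common probability space, the interacting system $(\lambda_i^N, \sigma_i^N, A_i^N, C_i^N)_{i \le N}$ together with i.i.d.\ copies $(\lambda_i^*, \sigma_i^*, A_i^*, C_i^*)_{i \le N}$ of the nonlinear process, all driven by the same randomness: the same i.i.d.\ initial data (so the two versions of individual $i$ start equal), the same sampled infectiousness and susceptibility curves and vaccination waiting times at each event, and the same Poisson measures $\mathcal N_i(\diff s, \diff\theta)$ on $[0,\infty) \times [0, \lambda_{\max}]$ used to trigger infections by thinning. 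The crucial structural fact is that, in this coupling, recovery and vaccination times are functions of an individual's own sampled variables and are therefore shared by the two versions; the \emph{only} mechanism that can make the two trajectories of individual $i$ disagree is a discrepancy in its infection events, which is governed by the mean field through $\Lambda^N - \Lambda^*$.

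The core of the argument is then a Grönwall estimate on the decoupling probability. Let $\tau_i$ be the first time the trajectory of individual $i$ in the $N$-particle system differs from that of its nonlinear copy, and set $p(t) = \P(\tau_i \le t)$, which is independent of $i$ by exchangeability and satisfies $p(0) = 0$. Before $\tau_i$ the two versions share the same age, state and susceptibility, so their infection clocks disagree at instantaneous rate $\sigma_i(s)\abs{\Lambda^N(s) - \Lambda^*(s)} \le \abs{\Lambda^N(s) - \Lambda^*(s)}$, using $\sigma \le 1$; this gives
\[
    p(t) \le \int_0^t \E\big[ \abs{\Lambda^N(s) - \Lambda^*(s)} \big] \diff s.
\]
To close the loop I would split $\Lambda^N(s) - \Lambda^*(s) = \frac{1}{N} \sum_i (\lambda_i^N(s) - \lambda_i^*(s)) + \big( \frac{1}{N} \sum_i \lambda_i^*(s) - \Lambda^*(s) \big)$. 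The first term is bounded in absolute value by $\frac{\lambda_{\max}}{N}\sum_i \indic_{\{\tau_i \le s\}}$, of expectation $\lambda_{\max} p(s)$; the second is a centered average of i.i.d.\ variables bounded by $\lambda_{\max}$, hence of order $\lambda_{\max}/\sqrt N$ in $L^1$ uniformly in $s$. Substituting yields $p(t) \le \lambda_{\max}\int_0^t p(s)\,\diff s + \lambda_{\max} t/\sqrt N$, and Grönwall's lemma gives $p(t) \le (\lambda_{\max} t/\sqrt N)\,\e^{\lambda_{\max} t} \to 0$.

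With decoupling controlled, convergence of the age structures follows. For bounded continuous $f$, on $\{\tau_i > t\}$ we have $A_i^N(t) = A_i^*(t)$ and $C_i^N(t) = C_i^*(t)$, whence
\[
    \Big| \angle{\nu_I^N(t), f} - \frac{1}{N} \sum_{i=1}^N f(A_i^*(t)) \indic_{\{C_i^*(t) = I\}} \Big| \le \lVert f \rVert_\infty \, \frac{1}{N} \sum_{i=1}^N \indic_{\{\tau_i \le t\}},
\]
whose expectation is $\lVert f\rVert_\infty\, p(t) \to 0$. The remaining average is a sum of i.i.d.\ bounded terms, so by the law of large numbers it converges to $\E[f(A^*(t))\indic_{\{C^*(t)=I\}}] = \int_0^\infty f(a) I(t,a)\,\diff a$, the last equality being the definition of $I(t,\cdot)$ as the density of $A^*(t)$ on $\{C^*(t) = I\}$. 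Thus $\angle{\nu_I^N(t), f} \to \int_0^\infty f(a) I(t,a)\,\diff a$ in probability; testing against a countable convergence-determining family and using that the limit is deterministic upgrades this to weak convergence of $\nu_I^N(t)$ in distribution, and the argument for $\nu_S^N(t)$ is identical.

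The main obstacle is the coupling construction itself, specifically the rigorous justification that recovery and vaccination events can be shared perfectly while only infections decouple. Because the per-individual dynamics is non-Markovian (the vaccination rate depends on the time since the last dose, and infectiousness and susceptibility are entire curves), one must first identify the correct Markovian state --- the elapsed age together with the currently active sampled curve and pending waiting time --- and then realize both systems as a measurable functional of a fixed stack of i.i.d.\ curves and Poisson measures, so that ``sharing randomness'' is well defined. Checking that, prior to $\tau_i$, the two susceptibility processes genuinely coincide (so the thinning discrepancy is exactly $\sigma_i\abs{\Lambda^N - \Lambda^*}$) requires an induction over the successive events of individual $i$, and this is where the formal construction of Appendix~\ref{A:modelConstruction} does the real work. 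Everything else --- the Grönwall estimate, the i.i.d.\ law of large numbers, and the passage from convergence of integrals to weak convergence of measures --- is routine once the coupling is in place.
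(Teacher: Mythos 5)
The paper does not actually prove Theorem~\ref{thm:lawLargeNumbers}: it is stated without proof and deferred entirely to \cite{forien22}, with only the exchangeability-plus-consistency heuristic of Section~\ref{SS:MKV} offered as justification. Your synchronous-coupling-plus-Gr\"onwall argument is the standard Sznitman-style way of making that heuristic rigorous, and the quantitative core is correct: before the decoupling time $\tau_i$ the two versions of individual $i$ share age, state and susceptibility, so the discrepancy rate is $\sigma_i\,\abs{\Lambda^N-\Lambda^*}\le\abs{\Lambda^N-\Lambda^*}$; the split of $\Lambda^N-\Lambda^*$ into a coupling-error term of expectation $\lambda_{\max}p(s)$ and a centered i.i.d.\ fluctuation of order $\lambda_{\max}/\sqrt{N}$ closes the Gr\"onwall loop (both terms are legitimately bounded because $\Lambda^N\le\lambda_{\max}$ and $\Lambda^*=\E[\lambda^*]\le\lambda_{\max}$); and the passage from convergence of $\angle{\nu^N_I(t),f}$ in probability to a deterministic limit, for $f$ in a countable convergence-determining class, to weak convergence in distribution is routine. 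One point deserves to be made explicit rather than implicit. Your coupling requires the infection mechanism to be realized by thinning a \emph{common} Poisson measure with marks in $[0,\lambda_{\max}]$, so that the two versions' infection events literally coincide off an event of small probability. The paper's formal construction in Appendix~\ref{A:modelConstruction} instead drives re-infections through shared exponential clocks $E_{i,k}$ via \eqref{eq:reinfectionTime}; under that construction the re-infection times of the two versions would almost surely differ by a small amount whenever $\Lambda^N\neq\Lambda^*$, so the event $\{\tau_i>t\}$ on which the trajectories agree exactly would be negligible and your estimate would have to be replaced by a bound on the (Skorokhod or age) distance between trajectories. You have therefore silently substituted an equal-in-law construction for the paper's; that is legitimate but is precisely the ``real work'' you allude to, and it should be stated. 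Finally, the identification of $I(t,\cdot)$ and $S(t,\cdot)$ as densities of $A^*(t)$ on $\{C^*(t)=I\}$ and $\{C^*(t)=S\}$ is part of the theorem's definition and is made explicit in Proposition~\ref{prop:PDErepresentation}, which you may invoke for the existence of these densities.
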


\subsection{PDE formulation of the limit}
\label{SS:largePopSize}

The previous section has characterized the law of large numbers limit of
the age structure of the epidemic in terms of the distribution of a
stochastic system representing the limiting dynamics of a single
individual in the population (of infinite size). We now provide a PDE formulation for this
distribution, which can be thought of as the forward Kolmogorov equation
associated to the previous stochastic process, although note that it is
not a Markov process.

\paragraph{Description of the limit.}
Consider the following PDE, whose terms will be introduced throughout
this section, 
\begin{align} 
    \begin{split} \label{eq:main}
    \partial_t I(t,a) + \partial_a I(t,a) &= - \mu_I(a) I(t,a) \\
    \partial_t S(t,a) + \partial_a S(t,a) &= - \mu_V(a) S(t,a) - \Lambda(t)
    \E_{t,a}\big[ \sigma(a) \big] S(t,a) \\
    I(t,0) &= \Lambda(t) \int_0^\infty \E_{t,a}\big[ \sigma(a) \big]
        S(t,a) \diff a\\
    S(t,0) &= \int_0^\infty \mu_I(a) I(t,a) \diff a + \int_0^\infty \mu_V(a) S(t,a) \diff a
    \end{split}
\end{align}
with initial conditions
\begin{align*}
    I(0, a) &= I_0 h_I(a) \\
    S(0, a) &= (1-I_0) h_S(a),
\end{align*}
where $I_0\in(0,1)$ is the fraction   of
infected individuals at $t=0$ and $h_I$ (resp. $h_S$) is age density of initially
infectious (resp. susceptible)  individuals. 
%An appropriate notion of weak solution to this PDE is provided in Appendix~\ref{A:weakSolution} as well as a proof of its well-posedness.

Both the equation for $S$ and $I$ have a transport term corresponding to
the aging phenomenon, and some removal terms corresponding to infections,
vaccinations, and recoveries. Recovery (resp.\ vaccination) occurs at
rate $\mu_I(a)$ (resp.\ $\mu_V(a)$) at age $a$, where $\mu_I(a)$ and
$\mu_V(a)$ are the (age-dependent) recovery and vaccination rates
respectively, which we assume to exist:
\begin{equation} \label{eq:hazardFunction}
    \forall a \ge 0,\quad \mu_I(a)\diff a = \frac{\P\big( T_I \in [a,a+\diff
    a]\big)}{\P( T_I \ge a)}, \quad
    \mu_V(a) \diff a = \frac{\P\big( T_V \in [a,a+\diff
    a]\big)}{\P( T_V \ge a)}.
\end{equation}
Newly recovered and vaccinated individuals become susceptible with age
$a=0$ (typically implying being immune), yielding the two
integrals in the age boundary condition for $S$.

The last and most interesting term corresponds to new infections. An
individual is infected at a rate which is the product of its own
susceptibility and of the force of infection in the population. In the
limiting system, the force of infection is obtained by integrating the
age-dependent infectiousness of $I$ individuals over the age structure:
\begin{equation} \label{eq:Lambda}
    \forall t \ge 0,\quad \Lambda(t) = \int_0^\infty
    \E\big[\lambda(a) \mid T_I > a\big] I(t,a) \diff a.
\end{equation}
We set $\Lambda \equiv 0$ for negative times. This is the usual
expression for the force of infection in an epidemic model structured by
time-since-infection \citep{kermack1927contribution, diekmann1995legacy,
brauer2005kermack}. We define
\begin{equation} \label{eq:biasExpectation}
    \E_{t,a}\big[\sigma(a) \big] = \E\Big[ \sigma(a) e^{-\int_0^a
        \Lambda(t-a+u) \sigma(u) \diff u} \Big] 
        \;\Big/\; 
        \E\Big[ e^{-\int_0^a \Lambda(t-a+u) \sigma(u) \diff u} \Big] 
\end{equation}
to be the expected susceptibility of an $S$ individual with age $a$ at
time $t$. The exponential term reflects that a susceptible individual
with age $a$ at time $t$ is conditioned on not being infected between
$t-a$ and $t$, which biases $\sigma$ in favor of a low susceptibility
during this time period. This is an interesting example where the
stochasticity of the underlying individual-based model does not entirely
vanish in the large population size limit. Disregarding this
stochasticity changes the limiting equations and hence the prediction of the model,
even at the macroscopic scale. A similar conditioning is considered in
\cite{breda12}. Note that if $\sigma$ is deterministic the bias vanishes,
that is, $\E_{t,a}[\sigma(a)] = \sigma(a)$. Our set of equations then becomes a
version of the reinfection model of Kermack and McKendrick \citep{kermack32,
kermack33, inaba01} in a closed population which incorporates
vaccination. In our model, this amounts to discarding the
inter-individuals variation in the immunity waning.

Finally we introduce the basic reproduction number $R_0$ as
\begin{equation}\label{eq:R0}
    R_0 = \int_0^{\infty}\E[\lambda(a)]\mathrm{d}a,
\end{equation}
which we assume to be finite. As usual, $R_0$ represents the average
number of secondary cases generated by an infected individual in a fully
susceptible population.

\paragraph{Weak solution and well-posedness.}
We now introduce the definition of a weak solution to a general
transport equations. Let $F$ be a locally integrable function on
$\mathbb{R}_+\times\mathbb{R}_+$. We say that $(f(t,a);\, t,a \ge 0)$ is
a weak solution to 
\begin{equation} \label{eq:strongSolution}
    \partial_t f(t,a) + \partial_a f(t,a) = F(t,a) f(t,a)
\end{equation}
if 
\begin{gather*}
    \forall a \le t,\quad f(t,a) = f(t-a, 0) \exp\Big( \int_0^a F(t-a+u, u) \diff u\Big) \\
    \forall a \ge t,\quad f(t,a) = f(0, a-t) \exp\Big( \int_{a-t}^a F(t-a+u, u) \diff u\Big).
\end{gather*}
This definition is motivated by a formal application of the method of
characteristics. Suppose that $f$ is a strong solution to the previous
equation (in the sense that $f$ is continuously differentiable and its
partial derivatives verify \eqref{eq:strongSolution} in the interior of
the domain). We see that, along the characteristic line such $t-a$ is
constant, $f$ solves a first order linear differential equation. More
precisely, by differentiating the map $g \colon u \mapsto f(t-u, a-u)$,
it solves
\[
    g'(u) = -F(t-u, a-u) g(u).
\]
Solving this equation on $(0, t \wedge a)$ and noting that 
$g(0) = f(t,a)$ lead to the above expression.
With this notion of weak solution, we can show that equation
\eqref{eq:main} is well-posed under mild technical assumptions.

\begin{proposition}\label{prop:uniqueness-sol-PDE}
Equation~\eqref{eq:main} has a unique weak solution on the  Skorokhod
space $\mathbb{D}(\R^+,\R^+)$, when the following conditions hold
\begin{itemize}
    \item there exists $\lambda_{\max}>0$ such that, $\forall a\geq 0$,   
        $\E[\lambda(a)]\leq \lambda_{\max}$,
    \item the density distribution functions of $T_I$ and $T_V$ and the functions
        \begin{equation} \label{eq:boundedIC}
        t\mapsto \int_0^\infty \mu_I(t+a)\mathrm{e}^{-\int_a^{a+t}\mu_I(u)\diff u}h_I(a)\diff a
        +\int_0^\infty \mu_V(t+a)\mathrm{e}^{-\int_a^{a+t}\mu_V(u)\diff u}h_S(a)\diff a
    \end{equation}
    are bounded.
\end{itemize}
\end{proposition}

The above result is proved in Section~\ref{sec:proof-wellposedness}.
The next result connects the PDE \eqref{eq:main} to the distribution of
the solution to the McKean--Vlasov \eqref{eq:McKeanVlasov}. It follows
from elementary manipulations of point processes, and we postpone its
proof until Section~\ref{sec:proofPDE}.
    
\begin{proposition} \label{prop:PDErepresentation}
    Let $(\lambda^*(t), \sigma^*(t),A^*(t), C^*(t))$ be the solution to
    the McKean--Vlasov equation~\eqref{eq:McKeanVlasov}. Then, if $I(t,
    \cdot)$ (resp.\ $S(t, \cdot)$) is the density of $A^*(t)$ on the
    event $C^*(t) = I$ (resp.\ $C^*(t) = S$), $(I(t,a);\, t,a \ge 0)$ and
    $(S(t,a);\, t,a \ge 0)$ are the unique weak solutions to
    \eqref{eq:main}.
\end{proposition}

\section{Long-term behavior of the epidemic}\label{S:publichealth}

\subsection{Equilibrium analysis}
\label{S:equilibria}

We are interested in the long-time behavior of equation \eqref{eq:main}.
If this PDE converges to an equilibrium, the equilibrium should be a
stationary solution of \eqref{eq:main}, that is, a solution which is
independent of $t$ and thus of the form 
\[
    \forall t,a \ge 0,\quad I(t,a) = I(a),\quad S(t,a) = S(a).
\]
Similarly, let $\Lambda$ be the quantity defined in \eqref{eq:Lambda},
but using the stationary age profile $(I(a);\, a \ge 0)$, and
$\E_a[\sigma(a)]$ be defined through \eqref{eq:biasExpectation}, using
the stationary force of infection $\Lambda$. Note that these quantities
no longer depend on the time variable. As is usual in similar epidemic
models, we distinguish between two types of equilibria: disease-free
equilibria, where there are no infected individuals in the population;
and endemic equilibria, where the disease persists in the population.

\paragraph{Disease-free equilibrium.}
First, suppose that $I \equiv 0$. Then the PDE reduces to a 
first order linear differential equation,
\[
    \forall a \ge 0,\quad S'(a) = -\mu_V(a) S(a),
\]
whose unique solution is 
\begin{equation} \label{eq:diseaseFree}
    \forall a \ge 0,\quad S(a) = S(0) \exp\Big( -\int_0^a \mu_V(u) \diff u \Big),
\end{equation}
where $S(0) = \E[T_V]^{-1}$ is so that $(S(a);\, a \ge 0)$ is a probability distribution.
This shows that \eqref{eq:main} always admits a unique disease-free equilibrium.
Note that this equilibrium could have been easily anticipated. In the
absence of infections, individuals only get vaccinated according to a
renewal process with renewal time distribution $T_V$. Equation
\eqref{eq:diseaseFree} is the stationary distribution of the time since
the last vaccination event for this renewal process.

\paragraph{Endemic equilibrium.}
We now turn our attention to endemic equilibria ($I\not\equiv 0$). Let us make some
computation to find an appropriate candidate. The two differential terms 
in \eqref{eq:main} are reduced to the following linear differential equations 
for $I$ and $S$:
\begin{gather*}
    \forall a \ge 0,\quad I'(a) = -\mu_I(a) I(a) \\
    \forall a \ge 0,\quad S'(a) = -\mu_V(a) S(a) - \E_a[\sigma(a)] \Lambda S(a).
\end{gather*}

Therefore any endemic equilibrium should fulfill that 
\begin{equation} \label{eq:eqI}
    \forall a \ge 0,\quad I(a) = I(0) \exp\Big(- \int_0^a \mu_I(u) \diff u\Big)
\end{equation}
and 
\begin{align}
    \nonumber
    \forall a \ge 0,\quad S(a) &= S(0) \exp\left(- \int_0^a \mu_V(u) \diff
    u - \Lambda\int_0^a \E_u[\sigma(u)] \diff u\right) \\
    &= S(0) \exp\left(- \int_0^a \mu_V(u) \diff u  \right)
    \E\left[ e^{- \Lambda \int_0^a \sigma(u) \diff u } \right].
    \label{eq:eqS}
\end{align}
In the last line we have used that $g(a) = \E[ e^{- \Lambda \int_0^a
\sigma(u) \diff u } ]$ is easily seen to solve $g'(a) =-
\Lambda\E_a[\sigma(a)] g(a)$, where at equilibrium $\E_a[\sigma(a)]=\E[
\sigma(a) e^{-\Lambda\int_0^a \sigma(u) \diff u} ] \;\big/\; \E[
e^{-\Lambda\int_0^a  \sigma(u) \diff u} ] $.

Recalling the definition of the force of infection \eqref{eq:Lambda},
then using \eqref{eq:eqI} and the definition of $R_0$ in \eqref{eq:R0},
\[
    \Lambda = \int_0^\infty I(a) \E[ \lambda(a) \mid T_I > a ] \diff a 
    = I(0) \int_0^\infty \E[\lambda(a)] \diff a = I(0) R_0.
\]
Using the boundary condition for $I${, the fact that $\sigma$ and $T_V$ are independent}, and the definition of
$\E_a[\sigma(a)]$,  
\begin{align*}
    I(0) &= \Lambda \int_0^\infty \E_a[\sigma(a)] S(a)\diff a
    = S(0) \int_0^\infty e^{-\int_0^a \mu_V(u) \diff u}
    \E\Big[ \Lambda \sigma(a) e^{- \Lambda \int_0^a \sigma(u) \diff u } \Big] 
    \diff a \\
    &= S(0) \int_0^\infty 
    \E\Big[ \indic_{\{T_V > a\}} \Lambda \sigma(a) e^{- \Lambda \int_0^a \sigma(u) \diff u } \Big] 
    \diff a 
    = S(0) \E\Big[ 1 - e^{- \Lambda \int_0^{T_V} \sigma(u) \diff u } \Big].
\end{align*}
Therefore, an endemic equilibrium should verify that 
\begin{equation} \label{eq:initialS}
    S(0) = I(0) \:\Big/\: \E\Big[ 1 - e^{- R_0 I(0) \int_0^{T_V} \sigma(u) \diff u } \Big].
\end{equation}
Together, these computations lead to the following criterion for the
existence of an endemic equilibrium.

\begin{proposition} \label{prop:endemic}
    There exists an endemic equilibrium for each positive solution $x$ of the
    equation $F_\e(x) = R_0$ with 
    \begin{equation} \label{eq:endemicity}
        F_\e(x) \coloneqq x \E[T_I] + x \frac{\displaystyle 
            \E\Big[ \int_0^{T_V} e^{-x \int_0^a \sigma(u) \diff u} \diff a\Big]}%
        {\displaystyle \E\Big[1 - e^{-x \int_0^{T_V} \sigma(u) \diff u} \Big]}.
    \end{equation}
    For a given solution $x$, the corresponding equilibrium is so that
    $I(0) = x / R_0$, $S(0)$ is given by \eqref{eq:initialS}, and
    $(I(a);\, a \ge 0)$ and $(S(a);\, a \ge 0)$ by \eqref{eq:eqI} and
    \eqref{eq:eqS} respectively.
\end{proposition}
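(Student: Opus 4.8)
The plan is to close the chain of identities already assembled in the text by imposing the one normalisation constraint that has not yet been used, and to run the argument in both directions: the forward direction shows that any endemic equilibrium forces $F_\e(x) = R_0$ with $x = \Lambda$, while the converse shows that every positive root $x$ produces a genuine stationary solution.

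For the forward direction I would start from where the excerpt ends. The transport terms give the profiles \eqref{eq:eqI} and \eqref{eq:eqS}, the force-of-infection identity gives $\Lambda = R_0 I(0)$, and the boundary condition for $I$ produces the relation \eqref{eq:initialS} tying $S(0)$ to $I(0)$. The point I would stress is that the boundary condition for $S$ adds nothing new: writing $\mu_I(a)\P(T_I > a)\diff a$ and $\mu_V(a)\P(T_V > a)\diff a$ as the densities of $T_I$ and $T_V$ gives $\int_0^\infty \mu_I(a) I(a)\diff a = I(0)$ and, by independence of $T_V$ and $\sigma$, $\int_0^\infty \mu_V(a) S(a)\diff a = S(0)\,\E[e^{-\Lambda\int_0^{T_V}\sigma(u)\diff u}]$, so the $S$-boundary condition collapses back to \eqref{eq:initialS}. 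This redundancy is precisely the mass-conservation consistency of \eqref{eq:main}, and it is why one equation is still missing.

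The missing equation is the normalisation
\[
    \int_0^\infty I(a)\diff a + \int_0^\infty S(a)\diff a = 1,
\]
inherited from the fact that $\nu_I^N(t) + \nu_S^N(t)$ has total mass one for every $N$. Here $\int_0^\infty I(a)\diff a = I(0)\E[T_I]$, while independence of $T_V$ and $\sigma$ gives $\int_0^\infty S(a)\diff a = S(0)\,\E[\int_0^{T_V} e^{-\Lambda\int_0^a \sigma(u)\diff u}\diff a]$. Substituting \eqref{eq:initialS} together with $I(0) = x/R_0$ and $\Lambda = x$, and multiplying through by $R_0$, turns the normalisation into exactly $F_\e(x) = R_0$. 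For the converse I would reverse the construction: given a positive root $x$, set $I(0) = x/R_0$, define $S(0)$ by \eqref{eq:initialS}, and build $(I,S)$ from \eqref{eq:eqI}--\eqref{eq:eqS}; one then reads off that these profiles are nonnegative, that the force of infection they induce equals $R_0 I(0) = x$, that both boundary conditions hold by \eqref{eq:initialS}, and that normalisation holds because $F_\e(x) = R_0$, so the pair is a stationary weak solution with $I \not\equiv 0$.

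The main obstacle is bookkeeping rather than conceptual. One must confirm that the self-referential equilibrium bias $\E_a[\sigma(a)]$ is compatible with the closed form \eqref{eq:eqS}, which the text already handles through the identity $g'(a) = -\Lambda\,\E_a[\sigma(a)]\,g(a)$ for $g(a) = \E[e^{-\Lambda\int_0^a\sigma(u)\diff u}]$, and one must ensure $S(0)$ is positive and finite. The latter needs $\E[1 - e^{-x\int_0^{T_V}\sigma(u)\diff u}] > 0$ for $x > 0$, i.e.\ that $\int_0^{T_V}\sigma(u)\diff u$ is not almost surely zero, together with convergence of all the integrals above; both are covered by the standing assumptions that $T_I$ and $T_V$ have finite expectation and that $R_0 < \infty$.
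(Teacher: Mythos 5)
Your proposal is correct and follows essentially the same route as the paper's proof: both directions hinge on combining the profiles \eqref{eq:eqI}--\eqref{eq:eqS} and the relation \eqref{eq:initialS} with the normalisation $\int_0^\infty I(a)\,\diff a + \int_0^\infty S(a)\,\diff a = 1$, which after setting $x = R_0 I(0)$ is precisely $F_\e(x) = R_0$. Your explicit check that the $S$-boundary condition collapses back to \eqref{eq:initialS} (which the paper dismisses as ``straightforward'') and your remark on the positivity of $\E\bigl[1 - e^{-x\int_0^{T_V}\sigma(u)\,\diff u}\bigr]$ are welcome additions but do not change the argument.
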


\begin{proof}
    Suppose that $(I(a);\, a \ge 0)$ and $(S(a);\, a \ge 0)$
    are a stationary solution of \eqref{eq:main}. Then, \eqref{eq:eqS}
    shows that 
    \begin{align*}
        \int_0^\infty S(a) \diff a 
        &= S(0) \int_0^\infty \P(T_V > a)
        \E\Big[ e^{- R_0 I(0) \int_0^a \sigma(u) \diff u } \Big] \diff a \\
        &= S(0) \E\Big[ \int_0^{T_V} e^{-R_0 I(0) \int_0^a \sigma(u)
        \diff u} \diff a \Big].
    \end{align*}
    Combining this to \eqref{eq:eqI} and \eqref{eq:initialS} yields
    \[
        \int_0^\infty I(a) \diff a + \int_0^\infty S(a) \diff a
        = I(0) \E[T_I] + I(0) \frac{\displaystyle 
        \E\Big[ \int_0^{T_V} e^{- R_0 I(0) \int_0^a \sigma(u) \diff u} \diff a\Big]}%
        {\displaystyle \E\Big[1 - e^{- R_0I(0) \int_0^{T_V} \sigma(u) \diff u} \Big]}
        = 1
    \]
    so that setting $x = I(0) R_0$ leads to a solution of
    \eqref{eq:endemicity}.
    
    Conversely, let $x$ be a solution $F_\e(x) = R_0$. Define
    $(I(a);\, a \ge 0)$ and $(S(a);\, a \ge 0)$ as in the statement of
    the result. The computation we have made already shows that both
    differential terms and the boundary condition for $I$ are fulfilled.
    It is straightforward to check that the boundary condition for $S$ is
    also fulfilled. All what remains to check is that 
    \[
        \int_0^\infty I(a) \diff a + \int_0^\infty S(a) \diff a = 1,
    \]
    which holds by making the same calculation as in the first part of
    the proof and using that $x$ solves $F_\e(x) = R_0$.
\end{proof}

\subsection{The endemic threshold}
\label{SS:endemicCriterion}

From the characterization of the existence of endemic equilibria in the
previous section, we see that there exists a threshold for $R_0$ under
which there can be no endemic equilibrium. Indeed, since $F_\e$ is
continuous, we see that 
\[
    \text{there exists an endemic equilibrium} 
    \iff 
    \text{$R_0 \ge \inf_{(\epsilon, \infty)} F_\e(x)$ for some $\epsilon > 0$}.
\]
We would like to obtain an explicit expression for this threshold and to
study the uniqueness of an endemic equilibrium if it exists. This
requires to study the variations of the function $F_\e$. Let us start
with two specific cases for which we can study the variations
analytically. The two cases covered by this result are broad enough for
many interesting applications, in particular choosing $\sigma$ to be of
the form \eqref{eq:sigmaSIRS} leads to a stochastic version of the SIRS
model, with general durations.

\begin{proposition} \label{prop:endemicFunction}
    Suppose that $\sigma$ is either deterministic or of the form
    \begin{equation} \label{eq:sigmaSIRS}
        \forall a \ge 0,\quad \sigma(a) = \indic_{\{a \ge T_R\}}
    \end{equation}
    for some random duration $T_R$ with $\E[T_R] < \infty$. Then $F_\e$
    is increasing and there exists a unique endemic equilibrium if and
    only if $R_0 \Sigma > 1$ with
    \begin{equation}  \label{eq:sigma} %\label{eq:criticalR}
        \Sigma \coloneqq \frac{\E\big[\int_0^{T_V} \sigma(a) \diff a\big]}{\E[T_V]}.
    \end{equation}
\end{proposition}

\begin{proof}
Define
\[
    \forall a \ge 0,\quad \phi(a) = \int_0^a \sigma(u) \diff u.
\]
The endemic function $F_\e$ can be written as 
\begin{equation}\label{eq:defF}
    \forall x > 0,\quad F_\e(x) = x \E[T_I] + \dfrac{\E\Big[ \int_0^{T_V} e^{-x \phi(a)} \diff a\Big]} {\E\Big[ \int_0^{T_V} \sigma(a)e^{-x \phi(a)} \diff a\Big]}.     
\end{equation}
We now distinguish between the two cases of the proposition.

\medskip
\noindent
\textit{Step susceptibility.} 
If $\sigma$ is of the form $\sigma(a)=\indic_{\{a\geq T_R\}}$ for some
random recovery duration $T_R$, we easily observe that
\begin{gather*}
    \E\Big[ \int_0^{T_V} e^{-x \phi(a)} \diff a\Big] 
    = \E[T_V\wedge T_R]+\E\left[\indic_{\{T_V>T_R\}}\int^{T_V}_{T_R}e^{-x (a-T_R)}\diff a\right] \\
    \E\Big[ \int_0^{T_V} \sigma(a)e^{-x \phi(a)} \diff a\Big] 
    = \E\left[\indic_{T_V>T_R}\int^{T_R}_{T_R}e^{-x (a-T_R)}\diff a\right].
\end{gather*}
Consequently, $F_\e(x)=x\E[T_I]+1+\E[T_V\wedge T_R]/\E[\int_0^{(T_V-T_R)_+}e^{-xa}\diff a]$, 
which is obviously an increasing function.

\medskip
\noindent
\textit{Deterministic susceptibility.}
From \eqref{eq:defF}, we note that 
\[
    F_\e'(x) = \E[T_I] + \frac{h(x)} {\E\Big[ \int_0^{T_V} \sigma(a)e^{-x \phi(a)} \diff a\Big]^2}
\]
with
\begin{multline*}
    h(x) = \E\left[ \int_0^{T_V} e^{-x \phi(a)} \diff a\right] 
    \E\left[ \int_0^{T_V} \sigma(a)\phi(a)e^{-x \phi(a)} \diff a \right] \\
    - \E\left[ \int_0^{T_V} \phi(a)e^{-x \phi(a)} \diff a\right] 
    \E\left[ \int_0^{T_V} \sigma(a)e^{-x \phi(a)} \diff a\right].
\end{multline*}
Let $(T, \tilde{T})$ be a pair of independent copies of $T_V$.
Since $\phi$ is deterministic
\begin{align*}
    h(x) &=  \E\left[ \int_0^T\int_0^{\tilde{T}} 
    \sigma(b)\phi(b)e^{-x (\phi(a) + \phi(b))} \diff a \diff b\right] 
    - \E\left[ \int_0^T \int_0^{\tilde{T}} \phi(a) \sigma(b)e^{-x (\phi(a)+\phi(b))} 
    \diff a \diff b \right] \\
    &=\E\left[\int _0^T\int_0^{\tilde T} \left(\sigma(a)-\sigma(b)\right)\phi(a)e^{-x( \phi(a)+ \phi(b))}   \diff a\diff b\right]\\
    &=\frac{1}{2}\E\left[\int _0^T\int_0^{\tilde T}
    \left(\sigma(a)-\sigma(b)\right)\left(\phi(a)-\phi(b)\right)e^{-x( \phi(a)+ \phi(b))}   \diff a\diff b\right] \ge 0,
\end{align*}
where we conclude using that $(\sigma(a)-\sigma(b))(\phi(a)-\phi(b)) \ge 0$ 
because both functions are non-decreasing. This shows that $F_\e'(x) > 0$,
proving that $F_\e$ is increasing.
\end{proof}

The critical value $1/\Sigma$ in \eqref{eq:sigma} corresponds to the limit of
$F_\e$ at $0$, which we can always compute (without any assumption on
$\sigma$) as 
\[
    \frac{1}{\Sigma} 
    = \lim_{x \to 0} F_\e(x) = \frac{\E[T_V]}{\E\big[\int_0^{T_V} \sigma(a) \diff a\big]}.
\]
As a consequence, it is easily seen that in general there exists at least
one endemic equilibrium if $R_0 \Sigma > 1$. However, having uniqueness
of this equilibrium and absence of endemic equilibrium when $R_0\Sigma \le 1$
requires that $F_\e$ is increasing, which we were not able to prove in
general. Though, numerical simulations of $F_\e$ suggest that it is an
increasing function for a larger class of nondecreasing random curves $\sigma$, 
and we expect this to hold more generally, see
Figure~\ref{fig:variationEndFunction} in Appendix~\ref{sec:model:param}. 

The criterion $R_0\Sigma > 1$ has an interesting interpretation in terms of the
survival of a branching process \citep{athreya1971branching}.
Writing
\begin{equation*} 
    \Sigma = 
    \frac{\E\big[ \int_0^{T_V} \sigma(a) \diff a \big]}{\E[T_V]}
    = \int_0^\infty \E[\sigma(a)] S(a) \diff a,
\end{equation*}
we note that $\Sigma$ corresponds to the mean susceptibility of the
population at the disease-free equilibrium given by \eqref{eq:diseaseFree}.
Consider the
epidemic generated by a single infected individual introduced in a
population at the disease-free equilibrium. This individual makes on
average $R_0$ infectious contacts with other individuals in the
population over the course of its infection. The target of each such
contact has a random susceptibility with expectation $\Sigma$, and thus
the average number of infectious contacts actually leading to a new
infection is $R_0 \Sigma$. As long as its size is small, the outbreak
generated by the original infected individual can be approximated by a
branching process with mean number of offspring $R_0 \Sigma$. This
branching process can only lead to a large outbreak if it is
supercritical, that is, if $R_0 \Sigma > 1$. Therefore,
this criterion expresses that a vaccination policy prevents
endemicity if it prevents a single infected individual in a population at
the disease-free equilibrium from starting a large outbreak.

This interpretation of the threshold is reminiscent of the celebrated
next-generation techniques in epidemic modeling
\citep{diekmann1990definition, diekmann2010construction} for assessing if
a disease can invade a population with heterogeneous susceptibility,
contacts, and infectiousness. However, note that in our model the
susceptibility of an individual is not fixed, but changes as it gets
vaccinated and its immunity is waning, and that the threshold
characterizes the existence of an endemic equilibrium rather than the
possibility of disease invasion. A similar interpretation was also
proposed in \cite{carlsson2020modeling} for their model.

\subsection{Long-term behavior of the solutions}
\label{SS:asymptoticHomogeneous}

The computation in the previous section suggests that an endemic
equilibrium exists if and only if $R_0 \Sigma > 1$. From a public health
perspective, it is important to assess if this endemic equilibrium
corresponds to the long-time behavior of our model when it exists. That
is, we would like to assess the stability of this equilibrium.

In the simple case of the SIRS model (with no vaccination), when an
endemic equilibrium exists (when $R_0 > 1$ in that case) it can be proved
that it is globally asymptotically stable. In our more complicated
setting, studying mathematically the stability of the endemic equilibrium
seems out of reach. We thus investigate this question numerically.

\begin{figure}%[!h]
    \centering
    \includegraphics[width=\textwidth]{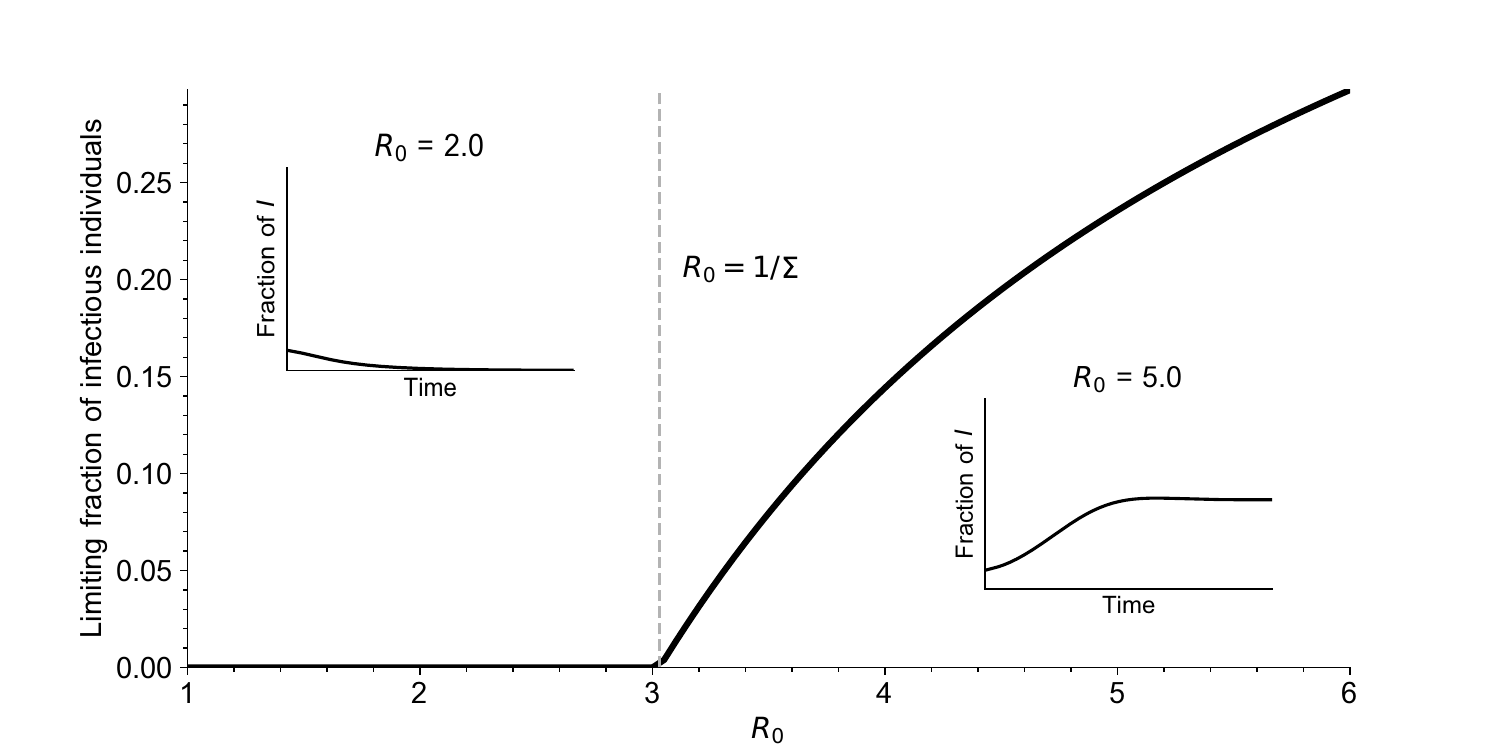}
    \caption{Bifurcation diagram for equation \eqref{eq:main}. For each
    value of $R_0$, the value of $\int_0^\infty I(t,a) \diff a$ is
    reported, for a large time $t = 5000$. The simulations are
    initialized with a fraction $I_0 = 0.1$ of infectious individuals,
    all other parameter values are given in Table~\ref{tab:parametersValue} (Appendix~\ref{sec:numerical:simulations}). 
    The dashed vertical grey line indicates the endemic threshold $1/\Sigma$ computed
    from \eqref{eq:sigma}, above which we expect to see existence of
    a stable endemic equilibrium. In the two insets $\int_0^\infty I(s,a)
    \diff a$ is plotted as a function of time $s\leq t$ for $R_0 = 2$ and $R_0 = 5$.}
    \label{fig:bifurcation1}
\end{figure}

\begin{figure}%[!h]
    \centering
    \includegraphics[width=\textwidth]{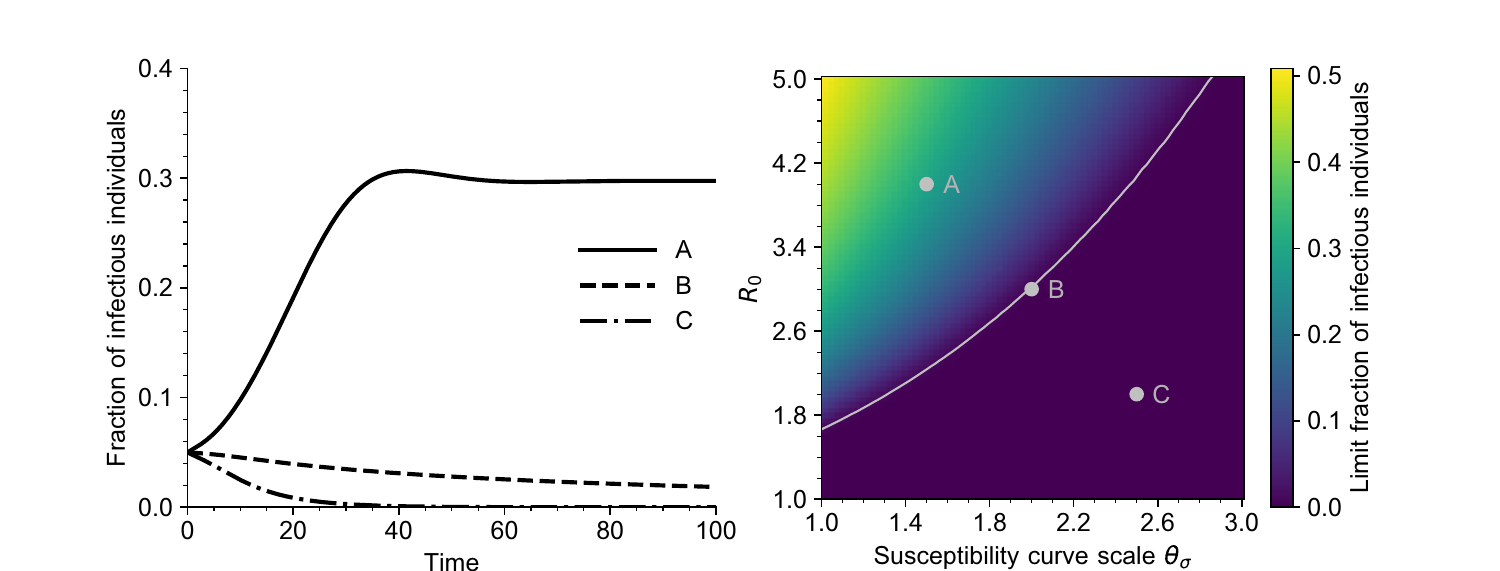}
    \caption{Left: Solutions of the PDE \eqref{eq:main} for
        three values of $R_0$ and $\theta_\sigma$. The parameters
        correspond to the grey dots on the right plot. All other
        parameters are given in Table~\ref{tab:parametersValue} (Appendix~\ref{sec:numerical:simulations}). Right: Bifurcation diagram of equation \eqref{eq:main}, as
        a function of $R_0$ and $\theta_\sigma$ (scale parameter, defined in Section~\ref{sec:model:param}).
        Each point of the heatmap represents the value of $\int_0^\infty
        I(t,a) \diff a$ for a large time $t = 300$. The grey curve is the
        endemic threshold $1/\Sigma$ defined in \eqref{eq:sigma}, as a
        function of $\theta_\sigma$. }
    \label{fig:bifurcation2}
\end{figure}

In Figure~\ref{fig:bifurcation1} and Figure~\ref{fig:bifurcation2} we
draw the bifurcation diagram of our system, as well as some typical
trajectories of the fraction of infected individuals in our model. A
first observation is that, on the simulation displayed in
Figure~\ref{fig:bifurcation1} and in all simulations carried out by the
authors, the model exhibits a simple asymptotic behavior. The epidemic
either dies out, in the sense that the fraction of infected individuals
goes to $0$, or survives in which case the fraction of infected
individuals converges to a positive value. Moreover, the region of
parameters for which the epidemic survives coincides with the region of
parameters for which $R_0 \Sigma > 1$, that is, for which we predict the
existence of an endemic equilibrium.

Overall, this suggests that the asymptotic behavior of our model is very
similar to that of the more usual systems of ordinary differential
equations of the SIRS type: when an endemic equilibrium exists, it is
globally asymptotically stable, otherwise the disease-free equilibrium is
globally asymptotically stable. When $R_0$ crosses the threshold $1/\Sigma$,
we observe an exchange of stability of the two equilibria, similar to a
transcritical bifurcation.

Let us make a final remark. The solution to the PDE \eqref{eq:main} has a probabilistic
interpretation as the age distribution of the solution to the
McKean--Vlasov equation \eqref{eq:McKeanVlasov}. In this probabilistic
setting, existence of an endemic equilibrium translates into existence of
a stationary age distribution, and proving the asymptotic stability of
this equilibrium amounts to proving convergence of the age distribution
towards the stationary distribution. This connexion could provide a way
to study the stability of endemic equilibria analytically. We refer to
models using piecewise deterministic Markov processes with age dependence
such as \cite{Bouguet15, Locherbach-Fournier16} for similar ideas.

\section{Impact of the vaccination policy on endemicity}
\label{S:impactVac}

In the light of the results of the previous section, the long-term
behavior of the epidemic depends mostly on three parameters, namely $R_0$,
$\mathcal{L}_\sigma$ and $\mathcal{L}_V$ (the distributions of $\sigma$
and $T_V$). In this section, we discuss the impact that policy-making can
have on the control of the epidemic through changing these parameters. 

Policy-making does not impact these three parameters in the same way.
The basic reproduction number $R_0$ can be lowered by reducing the
contact rate in the population, but is not dependent on the way vaccines
are administrated. We will consider it as fixed since we are mostly
interested in studying the impact of vaccination rather than changes in
the contact rate. Similarly, we think of $\sigma$ as reflecting the
protection against reinfection provided by the host immunity. The waning
of this protection is therefore dictated by the biological features of
the disease and of the host immunity, which cannot be influenced by
policy-making. (We do neglect the fact that part of the variation of the
susceptibility might come from behavioral changes that could be affected
by policy.) We thus consider the law $\mathcal{L}_\sigma$ of $\sigma$ as
being also fixed. Finally, we think of $T_V$ as resulting from the vaccination
strategy being applied. Typically, the law of $T_V$ depends on the number
of doses administrated, on the instructions given to the general
population on when and how often to get vaccinated, and on how these
instructions are being followed. The law of $T_V$ has a complicated
effect on the outcome of the disease, which depends strongly on the
distribution of $\sigma$ and that we aim to study.

In the rest of this work, we will use $\Sigma$ as an indicator
of the efficiency of the vaccination policy, and try to see what
distribution $\mathcal{L}_V$ of $T_V$ might achieve a lower $\Sigma$.

\subsection{The cost of a vaccination policy}

Intuitively, vaccinating the population more often on average should
result in a higher protection against transmissions, but comes at a
higher cost (of producing the vaccines and deploying them for instance).
We will quantify this cost in order to compare the efficiency of a
vaccination strategy (that is, of a distribution of $T_V$) relative to
its cost, and not only in absolute terms.

A natural measure of the cost of a vaccination policy is the per capita
per unit of time number of vaccine doses that are injected. In our model,
the number of doses injected between time $t$ and $t + \diff t$ is 
\[
    \int_0^\infty S(t,a) \mu_V(a) \diff a \cdot \diff t.
\]
If the population is at the disease-free equilibrium \eqref{eq:diseaseFree}, 
a simple computation shows that the number of doses administrated per
unit of time is
\[
    \int_0^\infty S(a) \mu_V(a) \diff a = \frac{1}{\E[T_V]}.
\]
We argue that, as long as the incidence and prevalence of the disease are low, 
the number of vaccines doses used per unit of time at the endemic
equilibrium can also be approximated by $1/\E[T_V]$.

Let us suppose that the population is at an endemic equilibrium, and
that the incidence is negligible, that is, that $I(0) \ll 1$. Using
\eqref{eq:eqI} and the latter assumption on the incidence, 
\[
    \int_0^\infty I(a) \diff a = I(0) \E[T_I] \ll 1
\]
so that the prevalence of the disease should also be low. From
\eqref{eq:eqS} and using $I(0) \ll 1$ we compute that the number of doses
injected per unit of time at the endemic equilibrium is approximated by
\[
    \int_0^\infty 
    S(0) \mu_V(a) \exp\left(- \int_0^a \mu_V(u) \diff u  \right)
    \E\left[ e^{- I(0) R_0 \int_0^a \sigma(u) \diff u } \right] \diff a 
    \approx S(0).
\]
Using that the prevalence is negligible, we further deduce from
\[
    1 = \int_0^\infty S(a) \diff a + \int_0^\infty I(a) \diff a 
    \approx \int_0^\infty S(0) \exp\left(- \int_0^a \mu_V(u) \diff u  \right) \diff a
    = S(0) \E[T_V]
\]
that the number of doses injected can be approximated by $S(0) \approx 1/\E[T_V]$.

Overall, based on this heuristic computation, we will use $1/\E[T_V]$ as
an indicator of the cost of a vaccination strategy $T_V$.

\subsection{Impact of the vaccination strategy}

We now study the effect that modifying the distribution of $T_V$ has on the
value of $\Sigma$ defined in \eqref{eq:sigma}. By using Fubini's theorem,
let us first re-write the expression for $\Sigma$ as 
\[
    \Sigma = \frac{\E\big[\int_0^{T_V} \sigma(a) \diff a \big]}{\E[T_V]} = \frac{\E[\Phi(T_V)]}{\E[T_V]},
\]
where the deterministic function $\Phi$ is defined as 
\[
    \forall t \ge 0,\quad \Phi(t) = \E\Big[ \int_0^t \sigma(a) \diff a \Big]
    = \int_0^t \E[\sigma(a)] \diff a.
\]

\paragraph{Optimal strategy for a fixed cost.} We assume that only
a fixed number of doses can be administrated per unit of time in the
population, say $1/m$, so that we restrict our attention to random
variables $T_V$ verifying $\E[T_V] = m$. What distribution of $T_V$ then
achieves the smallest value of $\Sigma$? In other words, given that a fixed
daily number of doses are available, how are these doses best distributed
to achieve the highest average immunity level in the population?

It turns out that this question is easily answered analytically. Since
$a \mapsto \sigma(a)$ is a.s.\ nondecreasing, the function $a \mapsto
\Phi(a)$ is convex. Therefore, applying Jensen's inequality we obtain
that 
\[
    \Sigma = \frac{\E[\Phi(T_V)]}{\E[T_V]} \ge \frac{\Phi(m)}{m},
\]
where we recall that we have assumed that $\E[T_V] = m$. We see that the
right-hand side of the previous inequality, $\Phi(m)/m$, is the
susceptibility at the disease-free equilibrium when $T_V = m$ almost
surely. It corresponds to an idealized situation where each individual
gets vaccinated every $m$ unit of time, exactly.
Therefore, given a vaccination strategy $T_V$, a better strategy that
uses the same number of doses is always to let each individual receive
vaccines at evenly spaced moments.

The optimal allocation strategy with $\E[T_V] = m$ is achieved by letting
$T_V = m$ a.s., that is, by letting $T_V$ follow the distribution with the
smallest dispersion. More generally, we argue that a distribution of
$T_V$ which is less dispersed performs better at preventing an endemic
state. Intuitively, if $T_1$ is less dispersed than $T_2$ and both have
the same mean, the distribution of $T_2$ has more mass at larger times.
Loosely speaking, the contribution of large values to the integral of a
convex function is large, and the value of $\E[\Phi(T_2)]$ should be
larger.
Being more
rigorous, one common way to formalize the notion of dispersal is to use
the notion of \emph{convex ordering} \citep[Chapter
3]{shaked2007stochastic}. A random variable $T_1$ is smaller in convex
ordering than another random variable $T_2$ if 
\[
    \forall \phi \text{  convex},\quad \E[\phi(T_1)] \le \E[\phi(T_2)],
\]
which we write as $T_1 \preceq T_2$. Being larger in convex ordering is a
common indicator of larger dispersion. Trivially, if $T_1 \preceq T_2$ and
$\Sigma_1$ and $\Sigma_2$ are the respective stationary susceptibilities,
we have that $\Sigma_1 \le \Sigma_2$. Overall, this indicates that larger
variability in the vaccination times perform worse at preventing the
spread of the disease at the population level.

Note that vaccinating at a fixed time after recovery was also
found to be the optimal vaccination strategy by \cite{khalifi2022extending}
for a related model. Although, the modeling of the vaccination strategies
is different in this article and in our work.

\paragraph{Effect of increasing the vaccination effort.} We now consider
the effect of varying the mean of $T_V$, which can be interpreted as
varying the number of vaccine doses administrated in the population.
Intuitively, one would expect that reducing $\E[T_V]$ (that is,
vaccinating more) also reduces the stationary susceptibility $\Sigma$
(that is, leads to a higher level of immunity).
However, it is not hard to come up with counter-examples where this is
not the case, and no conclusion can be drawn in general.

Nonetheless, it is a reasonable assumption to suppose that increasing the
number of vaccines administrated will not drastically change the shape of
the distribution of $T_V$, but rather modify it in a continuous way. One
way to model this effect is to consider a variable $m \ge 0$ representing the
(inverse of the) vaccination effort. At vaccination effort $m$, the
vaccination period is distributed as $mT_V$, for some fixed random
variable $T_V$. (Increasing the number of vaccines only changes the
\emph{scale} of the distribution.) Clearly, since $\Phi$ is convex, the
function
\[
    m \mapsto \frac{\E[ \Phi(m T_V) ]}{\E[mT_V]}
\]
is increasing. We do recover the expected and intuitive behavior: as $m$
increases, less vaccines are administrated, and $\Sigma$ increases, that
is, the population becomes more susceptible to the disease.

\section{Public health applications}
\label{S:twoGroups}

As an application of our model, we now study in more details two
specific situations. In the first situation, we assume that a fraction of
the population does not get vaccinated. This could reflect among other
examples vaccine hesitancy, impossibility to receive a vaccine, or
unequal access to the vaccination. We want to understand the impact at
the population level of having such a subpopulation that is not
vaccinated, and to derive an expression for the minimal fraction of the
population that needs to be vaccinated to prevent endemicity.

In the second situation, we consider that the population is divided into
two groups, which can represent two distinct physical locations (cities,
countries), or two groups in a heterogeneous population. We assume that a
fixed number of vaccine doses can be administrated per unit of time, due
to resource limitation such as limited vaccine production or deployment.
We investigate the impact of an uneven allocation of these doses between
the two groups.

In both situations the population is no longer homogeneous, in the sense
that it is made of several groups with a different vaccination policy
enforced in each group. We start by making a straightforward extension of
our model to such a heterogeneous population in Section~\ref{SS:heterogeneousModel}. 
We will derive briefly in Section~\ref{SS:endemicHeterogeneous} a
corresponding law of large numbers and criterion for the existence of an
endemic equilibrium, similar to that for the homogeneous model. In
Section~\ref{sec:twoGroups} we provide some general results in the case
of two subpopulations, and we finally our two situations of interest in
Section~\ref{SS:endemicity_2pop}.

\subsection{Modeling vaccination heterogeneity}
\label{SS:heterogeneousModel}

In this section, we consider a population of size $N$ divided into $L$
subgroups. These groups model some heterogeneity in the population such
as age classes, physical locations, or compliance to public health
recommendations. We suppose that these groups mix heterogeneously
according to some contact matrix, modeling for example heterogeneous
social mixing \citep{prem2017projecting, koltai2022reconstructing} or
mobility patterns between different physical areas
\citep{balcan2009multiscale, merler2010role}. We also assume that
different groups are not vaccinated at the same rate, modeling for
instance vaccination policies targeted at specific groups
\citep{Hardt2016}, heterogeneous administration of vaccines
\citep{Perry2021, Mathieu2021}, or differences in beliefs and compliance
to public health recommendations \citep{Hofmann2006, Downs2008,
Lazarus2023}. We make the simplifying assumption that the group to which
an individual belongs does not change during the course of the epidemic,
and that the distribution of the susceptibility and infectiousness curves
do not depend on the group. Let us give a more precise definition of the
dynamics and of the parameters of this extension.

% We present in this section the general case of a population of size $N$
% divided into $L$ fixed subgroups according to the vaccination policy in each
% subgroup (within each subgroup, the vaccination times of each individual
% have the same distribution). We allow the contact rates to vary
% between different subgroups according to some contact matrix. We assume
% that the infectiousness and susceptibility curves do not depend on the
% subgroup, they are the same for the whole population. Let us give a more
% precise definition of the dynamics, and of the parameters of this extension.

\paragraph{Description.}
Suppose that each of the $N$ individuals in the population now belongs
to one of $L$ groups, labeled by $\ell \in \{ 1, \dots, L \}$. We assume
that individuals remain in the same group at all times. The number of
individuals in group $\ell$ is denoted by $N_\ell$, and we assume that
$N_\ell / N \to p_\ell \in (0, 1)$ as $N \to \infty$. An individual is
now identified by a pair $(\ell, i)$, $\ell$ being its group and $i \le
N_\ell$ its label within group $\ell$. We will denote its infectiousness
and susceptibility at time $t$ by $\lambda^N_{\ell,i}(t)$ and
$\sigma^N_{\ell,i}(t)$ respectively.
Individuals follow the same dynamics as described in
Section~\ref{SS:model} with two modifications: contacts are heterogeneous
between groups and the group of an individual affects its vaccination
rate.

Heterogeneity in the contacts is encoded as a symmetric matrix
$\Gamma =(\gamma_{\ell, \ell'})$, where $\gamma_{\ell,\ell'} =
\gamma_{\ell',\ell} \ge 0$ gives the contact intensity between an
individual of group $\ell$ and one of group $\ell'$. We assume that this
contact matrix does not depend on the size of the population $N$. Note
that $\gamma_{\ell, \ell'}$ corresponds to a contact rate per pair of
individuals, so that the overall contact rate between group $\ell$ and
group $\ell'$ is $\gamma_{\ell, \ell'} N_\ell N_{\ell'}$. The rate at
which individual $(\ell,i)$ gets infected at time $t$ is 
\[
    \sigma^N_{\ell,i}(t) \cdot \sum_{\ell' = 1}^L \gamma_{\ell',\ell} \Lambda^N_{\ell'}(t),
\]
where 
\[
    \Lambda^N_\ell(t) = \frac{1}{N} \sum_{i'=1}^{N_\ell} \lambda^N_{\ell, i'}(t).
\]
In words, each individual $(\ell', i')$ makes an infectious contact with
$(\ell, i)$ at rate $\gamma_{\ell',\ell} \lambda^N_{\ell',i'}(t) / N$, and
such an infectious contact at time $t$ yields an infection with probability
$\sigma^N_{\ell,i}(t)$.

Upon infection, an individual samples an infectious period $T_I$
and an infectiousness $\lambda$ according to the same distribution
$\mathcal{L}_\lambda$ as in the homogeneous model, regardless of its
group. Upon entry in the $S$ state, the susceptibility $\sigma$ of any
individual is also sampled according to the same common distribution
$\mathcal{L}_{\sigma}$ which does not depend on the group. However, an
individual in group $\ell$ samples its waiting time until the next
vaccination according to the distribution of a random variable $T_\ell$
that depends on the group. (Note that we have dropped the $V$ subscript
to ease the notation.) 

This model could be easily made more general by allowing the distribution 
of the infectiousness and susceptibility $\mathcal{L}_\lambda$ and
$\mathcal{L}_\sigma$ to depend on the group. This could represent
a heterogeneous vulnerability to the disease for instance.

\paragraph{Large population size limit.} As in the homogeneous model, we
can derive a law of large numbers limit for the age and state structure
of the epidemic. Let the empirical measure of ages of $I$ and $S$
individuals in group $\ell$ be denoted respectively as 
\[
    \nu_{I,\ell}^N(t) = 
    \frac{1}{N} \sum_{\substack{i = 1,\dots,N_\ell\\C^N_{\ell,i}(t)=I}}
    \delta_{A^N_{\ell,i}(t)},\quad 
    \nu_{S,\ell}^N(t) = 
    \frac{1}{N} \sum_{\substack{i = 1,\dots,N_\ell\\C^N_{\ell,i}(t)=S}}
    \delta_{A^N_{\ell,i}(t)}.
\]
If the initial age structures converge, the above empirical measures
should converge respectively, as $N \to \infty$, to the solution
$(I_\ell(t,a);\, a \ge 0)$ and $(S_\ell(t,a);\, a \ge 0)$ of the
following multidimensional version of equation~\eqref{eq:main},
\begin{align}
\begin{split} \label{eq:mainHeterogeneous}
    \partial_t I_\ell(t,a) + \partial_a I_\ell(t,a) &= - \mu_I(a) I_\ell(t,a) \\
    \partial_t S_\ell(t,a) + \partial_a S_\ell(t,a) &= - \mu_{V,\ell}(a) S_\ell(t,a) 
    - \sum_{\ell'=1}^L \gamma_{\ell',\ell} \Lambda_{\ell'}(t) \cdot
    \E_{t,a;\ell}[\sigma(a)] S_\ell(t,a) \\
    I_\ell(t,0) &= \sum_{\ell' = 1}^L \gamma_{\ell',\ell} \Lambda_{\ell'}(t)
    \cdot \int_0^\infty \E_{t,a;\ell}[\sigma(a)] S_\ell(t,a) \diff a \\
    S_\ell(t,0) &= \int_0^\infty \mu_{V,\ell}(a) S_\ell(t,a) \diff a + \int_0^\infty \mu_I(a) I_\ell(t,a) \diff a.
\end{split}
\end{align}
The initial condition of this system of PDE is a straightforward
extension of that in equation~\eqref{eq:main} and we do not write it down
explicitly. Nevertheless note that the initial condition should fulfill
that
\[
    \forall \ell \le L,\quad \int_0^\infty I_\ell(0,a) + S_\ell(0,a)
    \diff a = p_\ell,
\]
where $p_\ell=\lim_{N\to +\infty}{N_\ell}/{N}.$

In the previous equation, $\mu_{V,\ell}(a)$ denotes the vaccination rate
in group $\ell$, obtained by replacing $T_V$ by $T_{\ell}$ in
\eqref{eq:hazardFunction}, and we define 
\[
    \Lambda_{\ell}(t) = \int_0^\infty I_\ell(t,a) \E[ \lambda(a) \mid T_I > a] \diff a
\]
and 
\begin{multline*}
    \E_{t,a;\ell}[\sigma(a)] = \E\Big[ \sigma(a) 
        \exp\Big(- \int_0^a \sum_{\ell'=1}^L \gamma_{\ell',\ell} \Lambda_{\ell'}(t-a+u) \sigma(u) \diff u\Big) 
    \Big] \\
    \Big/\: \E\Big[ 
        \exp\Big(- \int_0^a \sum_{\ell'=1}^L \gamma_{\ell',\ell} \Lambda_{\ell'}(t-a+u) \sigma(u) \diff u\Big) 
    \Big].
\end{multline*}
All other terms have been defined in Section~\ref{SS:largePopSize}.

\subsection{Endemicity criterion for heterogeneous vaccination}
\label{SS:endemicHeterogeneous}

Again, we study the equilibria of \eqref{eq:mainHeterogeneous} to derive
a criterion for the existence of an endemic equilibrium. We look for
solutions of \eqref{eq:mainHeterogeneous} of the form
\[
    \forall t, a \ge 0,\: \forall \ell \in \{1, \dots, L\},\quad I_\ell(t,a) = I_\ell(a),\quad S_\ell(t,a) =
    S_\ell(a).
\]
We will assume from now on that the matrix $\Gamma$ is irreducible. In
this case, it is not hard to see that there are only two possible types
of equilibria: either the disease is absent in each groups ($I_\ell
\equiv 0$ for all $\ell \in \{1, \dots, L\}$) or it is endemic in each
group ($I_\ell > 0$ for all $\ell \in \{ 1, \dots, L\}$).
Naturally, we will refer to the former situation as a disease-free
equilibrium, and to the latter one as an endemic equilibrium.

\paragraph{Disease-free equilibrium.} As in the homogeneous case, in the
absence of infected individuals the only remaining dynamics are the
vaccination according to renewal processes. It is not hard to check that the only
equilibrium of \eqref{eq:mainHeterogeneous} with $I_\ell(a) \equiv 0$ for all
$\ell \ge 1$ is given by
\[
    \forall a \ge 0,\: \forall \ell \in \{1, \dots, L\},\quad
    S_\ell(a) = \frac{p_\ell}{\E[T_\ell]} \exp\Big(-\int_0^a \mu_{V,\ell}(u) \diff u \Big).
\]

\paragraph{Endemic equilibrium.}
In principle, we could use the same arguments as in the homogeneous case
and find a set of $L$ coupled equations similar to \eqref{eq:endemicity}
that characterize the existence of stationary points of
\eqref{eq:mainHeterogeneous}. However, solving these equations would prove
to be an even more difficult task in this multi-dimensional setting. We
choose not to go in this direction and prefer to start from the
connection between the endemicity criterion in the homogeneous case and
the survival of a well-chosen branching process.

Suppose that a single individual of group $\ell$ is infected in a
population at the disease-free equilibrium. Over its entire infectious
period, this individual makes on average $p_{\ell'} \gamma_{\ell, \ell'}
R_0$ infectious contacts with individuals of type $\ell'$. An individual
of group $\ell'$ targeted by an infectious contact has a random
susceptibility. The expectation of this random variable is the mean
susceptibility at the disease-free equilibrium of group $\ell'$, that is,
\begin{equation} \label{eq:groupSusceptibility}
    \Sigma_{\ell'} 
    \coloneqq \frac{1}{p_{\ell'}} \int_0^\infty S_{\ell'}(a) \E[\sigma(a)] \diff a 
    = \frac{\E\big[ \int_0^{T_{\ell'}} \sigma(a) \diff a\big]}{\E[T_{\ell'}]}.
\end{equation}
Therefore, an infected individual of group $\ell$ produces on average
$R_0 m_{\ell,\ell'}$ secondary infections in group $\ell'$, with
\begin{equation} \label{eq:nextGenMatrix}
    m_{\ell,\ell'} := p_{\ell'} \gamma_{\ell, \ell'} \Sigma_{\ell'}.
\end{equation}
We introduce the matrix
\[
    M \coloneqq (m_{\ell,\ell'})_{1\leq\ell,\ell'\leq L}.
\]

According to the previous discussion, the epidemic generated by a single
infected individual can be thought of as a multi-type branching process
with mean offspring matrix $R_0 M$. The type of an individual in the
branching process corresponds to the group to which it belongs. It is now
a classical result from the theory of branching processes that, under our
mild condition that the contact matrix $\Gamma$ is irreducible, the
latter branching process can survive with positive probability if and
only if the leading eigenvalue of its mean offspring matrix $R_0 M$ is
larger than $1$, that is, if and only if $R_0 \rho > 1$, where $\rho$
is the leading eigenvalue of the matrix $M$ \citep[Chapter~V]{athreya1971branching}. 
This is again reminiscent of the next-generation matrix techniques of
\cite{diekmann1990definition, diekmann2010construction}.

\begin{figure}%[!ht]
    \centering
    \includegraphics[width=\textwidth]{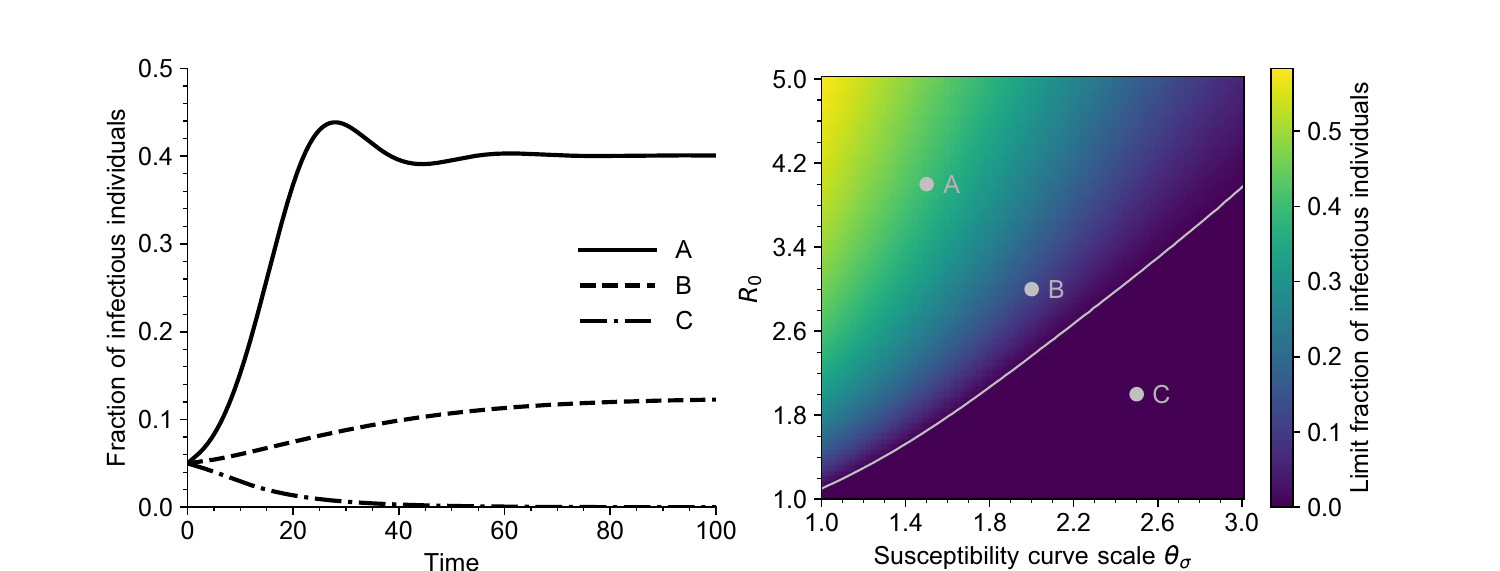}
    \caption{Left: Solutions of the PDE \eqref{eq:main} for
        three values of $R_0$ and $\theta_\sigma$. The parameters
        correspond to the grey dots on the right plot. All other
        parameters are given in Table~\ref{tab:parametersValue} (Appendix~\ref{sec:numerical:simulations}). Right: Bifurcation diagram of equation \eqref{eq:mainHeterogeneous}, 
        as a function of $R_0$ and $\theta_\sigma$ (scale parameter, defined in
        Section~\ref{sec:model:param}). The population is made of three subpopulations with
        contact matrix and vaccination parameters given in Table~\ref{tab:parametersValueHeterogeneous}. 
        Each point of the heatmap represents the value of the total
        fraction of infectious individuals 
        $\int_0^\infty I_1(t,a)+I_2(t,a)+I_3(t,a) \diff a$ for a large time $t = 300$. The
        grey curve is the endemic threshold $1/\rho$ as a function of
        $\theta_\sigma$, where $\rho$ is the leanding eigenvalue
        of $M$ defined in \eqref{eq:nextGenMatrix}.}
        \label{fig:bifucartionHeterogeneous}
\end{figure}

\paragraph{Asymptotic behavior.} As in the case of homogeneous contacts,
we are ultimately interested in assessing the effect of the parameters of
the model on the persistence of the disease in the population for a long
time. The criterion that we have derived above for the existence of an
endemic equilibrium is heuristic, and does not guarantee that the state
of the population converges to that endemic equilibrium when it exists. 

Again, we study the asymptotic behavior of the PDE numerically by considering the
bifurcation diagram of our model in Figure~\ref{fig:bifucartionHeterogeneous}, 
in the case of three subpopulations. The trajectory of the total fraction
of infected individuals among all groups is plotted for a sample of
typical trajectories. As in the homogeneous case, the asymptotic behavior
of the model is simple, and it seems to converge to a limit. In this
limit, depending on the parameters, the epidemic is either extinct or
has reached an endemic equilibrium. Again, we see a good agreement
between our theoretical prediction for the existence of an endemic
equilibrium ($R_0\rho > 1$) and the parameter region where the epidemic
does not go extinct. This also validates that our heuristic, based on the
survival probability of a certain multi-type branching process, seems to
give the right criterion for the existence of an endemic equilibrium.

\subsection{General results on two groups}
\label{sec:twoGroups}

We will discuss our two applications in the simpler context of only two
subpopulations, $L = 2$. Before considering these applications, let us
describe shortly how we parametrize the contact matrix and give some
general results in this case.

The contact matrix $\Gamma$ introduces many new parameters to the model.
We reduce the number of such parameters by assuming that all groups have the
same activity level. That is, we assume that each individual makes on
average contacts at the same rate, regardless of its group. Without loss
of generality, we can assume that this average number of contacts is $1$,
which leads to the constraint that 
\begin{equation} \label{eq:activityLevel}
    p_1 \gamma_{1,1} + p_2 \gamma_{1,2} = 1,\qquad
    p_1 \gamma_{1,2} + p_2 \gamma_{2,2} = 1.
\end{equation}
In the case of a general number of groups $L$ this condition would read
\[
    \forall \ell \in \{1, \dots, L\},\quad \sum_{\ell'=1}^L p_{\ell'} \gamma_{\ell, \ell'} = 1.
\]
For $L=2$, under assumption \eqref{eq:activityLevel} and the additional
constraint that contacts are symmetric, all contact matrices can be
parametrized as
\begin{equation} \label{eq:contact_matrix}
    \Gamma = 
    \begin{pmatrix}
        \frac{1}{p_1}(1-p_2\alpha) & \alpha \\
        \alpha & \frac{1}{p_2}(1-p_1\alpha)
    \end{pmatrix}
\end{equation}
for some $\alpha \in
\left[0,\min\left(\frac{1}{p_1},\frac{1}{p_2}\right)\right]$. The
remaining degree of freedom $\alpha$, which we will refer to as the
contact parameter, tunes the assortativity of the contacts:
\begin{itemize}
    \item for $\alpha \in [0, 1)$ the population is assortative and
        individuals make more contacts within their own group (for $\alpha
        = 0$ the populations would be disconnected);
    \item for $\alpha = 1$  the population is well-mixed and
        contacts are homogeneous;
    \item for $\alpha \in\left(1,\min\left(\frac{1}{p_1},\frac{1}{p_2}\right)\right]$  the population is
        dissortative and individuals make more contacts outside of their
        groups.
\end{itemize}
Under this parametrization, the endemic threshold is given by the
inverse of the leading eigenvalue of the matrix
\[
    M=\begin{pmatrix}
        (1-p_2\alpha)\Sigma_{1}  &   p_2\alpha\Sigma_{2}    \\
        p_1\alpha\Sigma_{1}      &   (1-p_1\alpha)\Sigma_{2}
    \end{pmatrix},
\]
where $\Sigma_1$ and $\Sigma_2$ have been defined in
\eqref{eq:groupSusceptibility}. The leading eigenvalue of this matrix
corresponds to the largest root of the equation
\[
    \rho^2 - ( (1-p_1\alpha)\Sigma_{2} + (1-p_2\alpha)\Sigma_{1})
    \rho + (1-\alpha) \Sigma_1\Sigma_2=0,
\]
given by
\begin{align}\label{eq:rho}
    \rho(\alpha)    
    &=\frac{1}{2}\left(\left(1-p_2\alpha\right)\Sigma_{1}+\left(1-p_1\alpha\right)\Sigma_{2}\right) \nonumber\\ 
    &\hskip 1cm+\frac{1}{2}\sqrt{\left(\left(1-p_2\alpha\right)\Sigma_{1}+\left(1-p_1\alpha\right)\Sigma_{2}\right)^2-4(1-\alpha)\Sigma_{1}\Sigma_{2}}.
\end{align}

Two general observations can be made at this point, which are stated
in Proposition~\ref{prop:rho-alpha} below. First, when $\Sigma_1 = \Sigma_2
\eqqcolon \Sigma$ the leading eigenvalue is $\rho = \Sigma$ and does not
depend on the contact parameter $\alpha$. Therefore, when two groups
are vaccinated in the same way ($\Sigma_1 = \Sigma_2$) and have
the same activity level (equation \eqref{eq:activityLevel} holds), the
population structure does not impact the existence of an endemic
equilibrium. A similar claim holds for any number of groups $L$.
Second, the leading eigenvalue $\rho$ is a non-increasing function of
$\alpha$ when all other parameters are fixed. This indicates that a
population with less assortative contacts performs better at preventing
the disease from reaching an endemic state.

\begin{proposition} \label{prop:rho-alpha}
    If $\Sigma_1 = \Sigma_2$, then $\rho = \Sigma_1$. Moreover, for any
    $\Sigma_1$ and $\Sigma_2$ the function $\alpha  \mapsto
    \rho(\alpha)$ defined in \eqref{eq:rho} is non-increasing and convex
    on $\left[ 0, \min(\tfrac{1}{p_1}, \tfrac{1}{p_2}) \right]$.
\end{proposition}

\begin{proof}
If $\Sigma_1 = \Sigma_2$, $M$ is a multiple of a stochastic matrix and
the leading eigenvalue is easily seen to be $\Sigma_1$. In particular,
the second part of the statement also holds in that case.

Let us now suppose that $\Sigma_1 \ne \Sigma_2$.
Denoting $\Sigma = p_2 \Sigma_1 + p_1 \Sigma_2$, we write $2\rho(\alpha)
= \left(\Sigma_{1}+\Sigma_{2}-\Sigma\alpha\right) + \sqrt{z(\alpha)}$,
with
\[
    z(\alpha)=\left(\Sigma_{1}+\Sigma_{2}-\Sigma\alpha\right)^2+4(\alpha-1)\Sigma_{1}\Sigma_{2}.
\]
Thus $2\rho'(\alpha)=-\Sigma+z'(\alpha)/(2\sqrt{z(\alpha)})$ and $2\rho''(\alpha)=\left(2z''(\alpha)z(\alpha)-z'(\alpha)^2\right)/\left(4z(\alpha)^{3/2}\right)$. We have
\begin{align*}
  z'(\alpha)&=-2\Sigma\left(\Sigma_{1}+\Sigma_{2}-\Sigma\alpha\right)+4\Sigma_{1}\Sigma_{2}\\
  z''(\alpha)&=2{\Sigma}^2.
\end{align*}
Consequently,
\begin{align*}
   2z''(\alpha)z(\alpha)-z'(\alpha)^2
   &=16\Sigma_{1}\Sigma_{2}\left(-\Sigma^2-\Sigma_{1}\Sigma_{2}
   +\Sigma(\Sigma_{1}+\Sigma_{2})\right)\\
   &=16p_1p_2(\Sigma_{1}-\Sigma_{2})^2> 0.
\end{align*}
We deduce that $\alpha\mapsto \rho'(\alpha)$ is an increasing  function. 

By noting that 
\[
    z(\alpha) = \left((1-p_2\alpha)\Sigma_{1} - (1-p_1\alpha)\Sigma_{2}\right)^2
    + 4\alpha^2 p_1 p_2 \Sigma_1 \Sigma_2 > 0,
\]
we see that $\alpha \mapsto \rho(\alpha)$ is well-defined on $[0, \infty)$ 
and that the previous computation still holds. Moreover, $\lim_{\alpha\to
\infty}\rho'(\alpha)=0$, and $\rho'$ only takes negative values. We
conclude that $\rho$ is a decreasing convex function.
\end{proof}

\subsection{Two public health applications}
\label{SS:endemicity_2pop}

We now study our two situations of interest.

\paragraph{Effect of vaccine hesitancy.}
We model partial vaccination of the population by assuming that
individuals from group $1$ get vaccinated whereas individuals from group
$2$ do not. Let $\Sigma \coloneqq \Sigma_1$ be the mean susceptibility at
the disease-free equilibrium within group $1$. We think of
$\Sigma$ as being fixed, corresponding to a given vaccination policy, and
$p_1$ as varying depending on the fraction of the population complying
with this policy. Let us assume that
almost surely $\sigma(a) \to 1$ as $a \to \infty$, so that individuals
immunity vanishes completely after a long-enough time. Since group $2$ does not
get vaccinated, the mean susceptibility in this group is set to be $\Sigma_2 = 1$.

We further assume for simplicity that the population is well-mixed
($\alpha = 1$), so that the mean offspring matrix is 
\[
    M =
    \begin{pmatrix}
        p_1 \Sigma_1 & p_2  \\
        p_1 \Sigma_1 & p_2 
    \end{pmatrix}
\]
and we can readily check that its leading eigenvalue is 
\[
    \rho = p_1 \Sigma + p_2 = 1 - (1-\Sigma)p_1.
\]
Define a critical fraction $p_c$ as 
\begin{equation} \label{eq:criticalFraction}
    p_c \coloneqq \frac{1-1/R_0}{1-\Sigma}.
\end{equation}
Then $p_c$ gives the critical fraction of the population that needs to be
vaccinated recurrently to prevent an endemic equilibrium, that is
\[
    R_0\rho \le 1 \iff p_1 \ge p_c.
\]
There is an interesting correspondence between this formula and the
well-known formula that gives the critical vaccine coverage to prevent an
epidemic \citep{anderson1982directly, anderson2020challenges}. If a
vaccine has an efficacy $E \in [0,1]$ (that is, if it provides a sterilizing
immunity with probability $E$) then the critical fraction of the
population that needs to be vaccinated to prevent an epidemic is 
\[
    p'_c = \frac{1-1/R_0}{E}.
\]
In our model, the efficiency of the vaccine policy is quantified by
$1-\Sigma$ which corresponds to the fraction of infections that are
blocked at the stationary disease-free equilibrium if all individuals get
vaccinated.

\paragraph{Optimal vaccine allocation between two groups.}
Consider a second situation where a fixed number of vaccine doses per
unit of time is available, and these doses need to be allocated between
two groups of individuals, which do not necessarily make homogeneous
contacts. (The contact heterogeneity accounts for the fact that the
groups may be two physically distinct locations: cities, countries,
regions.) We model this situation in the following way.

Let $T_1$ and $T_2$ be the vaccination times in each group with expectations $m_1 =
\E[T_1]$ and $m_2 = \E[T_2]$, and let $1/m$ be the per unit of time
number of doses that can be allocated in the total population. Since the number
of doses injected in group $\ell$ is $p_\ell / \E[T_\ell]$, the fact that the
total number of doses injected in the population is $1/m$ adds the
constraint that 
\begin{equation} \label{eq:fixedDoses}
    \frac{p_1}{\E[T_1]} + \frac{p_2}{\E[T_2]} 
    = \frac{p_1}{m_1} + \frac{p_2}{m_2}
    = \frac{1}{m}.
\end{equation}
The set of all pairs $(m_1, m_2)$ verifying \eqref{eq:fixedDoses} for a
given $m$ can now be parametrized by a single parameter $\beta$ as
\[
    \forall \beta\in\left[-\frac{1}{p_2},\frac{1}{p_1}\right], \quad \frac{1}{m_1(\beta)}=\frac{1}{m}+\frac{ p_2}{m}\beta,\quad \frac{1}{m_2(\beta)}=\frac{1}{m}-\frac{ p_1}{m}\beta.
\]
Under this parametrization, we can interpret $\beta$ as assessing the
\emph{fairness} of the allocation in the sense that
\begin{itemize}
    \item when $\beta=0$, all doses are allocated evenly across the two groups;
    \item when $\beta>0$ population 1 is favored and if $\beta=\frac{1}{p_1}$ all the doses are allocated to population $1$;
    \item when $\beta<0$ population 2 is favored and if $\beta=-\frac{1}{p_2}$, all the doses are allocated to population $2$.
\end{itemize}

Fix some random variable $T$ with $\E[T] = m$ and define 
\begin{equation} \label{eq:vaccinationAllocation}
    T_1(\beta) = \frac{1}{1+\beta p_2} T,
    \qquad
    T_2(\beta) = \frac{1}{1-\beta p_1} T.
\end{equation}
Then $(T_1(\beta), T_2(\beta))$ is a natural family of random variables
verifying that $\E[T_\ell(\beta)] = m_\ell(\beta)$, and represents a
possible allocation of the doses between the two population with fairness
parameter $\beta$.
We show below in Proposition~\ref{prop:fairAllocation} that, when the
population is assortative ($\alpha \le 1$), the minimal eigenvalue $\rho$
is achieved at $\beta = 0$. In other words, the best possible allocation
to prevent an endemic state is the fair allocation ($\beta = 0$) where
individuals in both subpopulations receive the same amount of vaccines.
This result is illustrated in Figure~\ref{fig:Rc-2pop}.

\begin{figure}
    \centering
    \includegraphics[width=\textwidth]{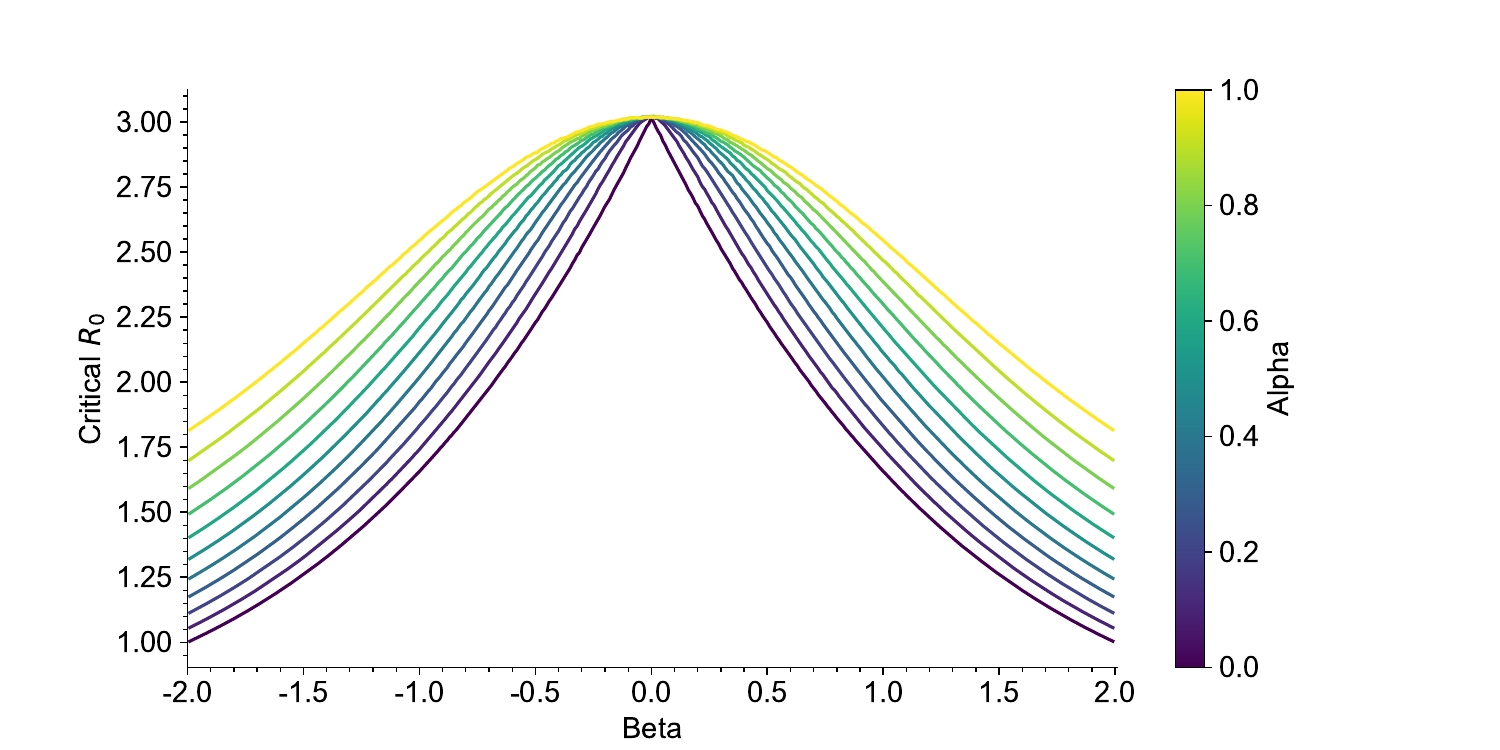}
    \caption{Behavior of $1/\rho$ as a function of the fairness parameter
        $\beta$, for different values of $\alpha$. The random variable 
        $T$ in \eqref{eq:vaccinationAllocation} has a Gamma distribution
        with shape and scale parameters given by $\kappa_V$ and $\theta_V$
        respectively, as in Table~\ref{tab:parametersValue} (Appendix~\ref{sec:numerical:simulations}). The
        population is assumed to be made of two groups of the same size,
        $p_1 = p_2 = \tfrac{1}{2}$. All other parameters are given in
        Table~\ref{tab:parametersValue} (Appendix~\ref{sec:numerical:simulations}).}
    \label{fig:Rc-2pop}
\end{figure}

\begin{proposition} \label{prop:fairAllocation}
    Let $(T_1(\beta), T_2(\beta))$ be as in \eqref{eq:vaccinationAllocation} 
    and let $\rho(\beta)$ be the largest eigenvalue of \eqref{eq:nextGenMatrix}.
    Then for any fixed number of doses $m > 0$ and any $\alpha \in
    [0,1]$, $\beta \mapsto \rho(\beta)$ is minimal at $\beta = 0$.
\end{proposition}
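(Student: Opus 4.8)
The plan is to show that the threshold $R_c(\beta) = 1/\rho(\beta)$, where $\rho(\beta)$ is the leading eigenvalue of the matrix $M$ in \eqref{eq:nextGenMatrix}, is maximized at $\beta = 0$ by proving the equivalent statement that $\rho$ is \emph{minimized} there. First I would record the baseline: at $\beta = 0$ the two groups are vaccinated identically, so $\Sigma_1 = \Sigma_2 = \Sigma$, the matrix $M$ is a multiple $\Sigma Q$ of a stochastic matrix $Q$ (row sums $1$), and hence $\rho(0) = \Sigma$. The whole proposition thus reduces to the single scalar inequality $\rho(\beta) \ge \Sigma$ for every admissible $\beta$.

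The two facts that drive the argument both stem from convexity of $\Phi$. Writing $\Sigma_\ell = \E[\Phi(T_\ell)]/\E[T_\ell]$ and using the scaling $T_1 = T/u_1$, $T_2 = T/u_2$ from \eqref{eq:vaccinationAllocation} with $u_1 = 1 + \beta p_2$ and $u_2 = 1 - \beta p_1$, I would introduce $g(u) := u\,\E[\Phi(T/u)]$, so that $\Sigma_\ell = g(u_\ell)/m$ and $\Sigma = g(1)/m$, where $m = \E[T]$. Since $g$ is the expectation of the perspective map $u \mapsto u\,\Phi(T/u)$ of the convex function $\Phi$, it is convex; and because $p_1 u_1 + p_2 u_2 = 1$ (the fixed-dose constraint \eqref{eq:fixedDoses}), Jensen's inequality yields $p_1\Sigma_1 + p_2\Sigma_2 \ge g(1)/m = \Sigma$. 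Moreover $g$ is decreasing (vaccinating more often lowers susceptibility), and for $\beta \ne 0$ exactly one of $u_1, u_2$ lies above $1$ and the other below, so $\Sigma_1 - \Sigma$ and $\Sigma_2 - \Sigma$ have opposite signs; setting $d_\ell := \Sigma_\ell - \Sigma$, this reads $d_1 d_2 \le 0$.

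To make $\rho \ge \Sigma$ checkable I would use that $\rho$ is the larger root of the upward parabola $P(x) = \det(xI - M) = x^2 - \tr(M)\,x + \det(M)$, so that $\rho \ge \Sigma$ is \emph{equivalent} to $P(\Sigma) \le 0$. With $\tr(M) = (1-p_2\alpha)\Sigma_1 + (1-p_1\alpha)\Sigma_2$ and $\det(M) = (1-\alpha)\Sigma_1\Sigma_2$, a short rearrangement in the variables $d_1, d_2$ gives the identity
\[
   -P(\Sigma) = -(1-\alpha)\,d_1 d_2 + \alpha\,\Sigma\,(p_1 d_1 + p_2 d_2).
\]
Both summands are nonnegative: the first because $\alpha \le 1$ and $d_1 d_2 \le 0$, the second because $\alpha, \Sigma \ge 0$ and $p_1 d_1 + p_2 d_2 \ge 0$ is exactly the Jensen inequality above. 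Hence $P(\Sigma) \le 0$, so $\rho(\beta) \ge \Sigma = \rho(0)$ and $R_c(\beta) \le R_c(0)$, which is the claim.

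I expect the main obstacle to be precisely this third step: recognizing that the eigenvalue inequality collapses to the single scalar condition $P(\Sigma) \le 0$, and then finding the grouping that displays $-P(\Sigma)$ as a sum of two manifestly nonnegative pieces. This is also where the hypothesis $\alpha \in [0,1]$ enters essentially, through $1-\alpha \ge 0$; for dissortative contacts $\alpha > 1$ the first term would change sign and fair allocation need no longer be optimal. A secondary point needing care is the perspective-function convexity of $g$, which is what upgrades convexity of $\Phi$ into the usable inequality $p_1\Sigma_1 + p_2\Sigma_2 \ge \Sigma$; the common random variable $T$ in \eqref{eq:vaccinationAllocation} is crucial here, as it is what realizes $\Sigma_1$ and $\Sigma_2$ as two values of a single convex function.
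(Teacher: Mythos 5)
Your proof is correct, but it follows a genuinely different route from the paper's. The paper first invokes the monotonicity of $\rho$ in the contact parameter (Proposition~\ref{prop:rho-alpha}, proved by a separate calculus computation on the explicit root formula) to reduce to the well-mixed case $\alpha=1$, where $M$ has rank one and $\rho(1,\beta)=p_1\Sigma_1(\beta)+p_2\Sigma_2(\beta)$; it then concludes with the same Jensen inequality you use, applied to $\Phi$ at the convex combination $p_1(1+\beta p_2)T_1(\beta)+p_2(1-\beta p_1)T_2(\beta)=T$. You instead stay at fixed $\alpha$ and test the characteristic polynomial at the candidate value $\Sigma$: your identity $-P(\Sigma)=\alpha\Sigma(p_1d_1+p_2d_2)-(1-\alpha)d_1d_2$ checks out (I verified the $\Sigma^2$ terms cancel and the linear terms collapse using $p_1+p_2=1$), and the two ingredients you need are the Jensen inequality $p_1d_1+p_2d_2\ge 0$ — identical to the paper's key step, whether phrased via the perspective function or directly on $\Phi$ — plus the straddling fact $d_1d_2\le 0$, which the paper never needs. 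That extra fact is genuine but easy: $g(u)=u\,\E[\Phi(T/u)]$ is non-increasing because $\Phi$ is convex with $\Phi(0)=0$, so $s\mapsto\Phi(s)/s$ is non-decreasing and pointwise $u_1\Phi(t/u_1)\le u_2\Phi(t/u_2)$ for $u_1>u_2>0$; you gesture at this only with the phrase ``vaccinating more often lowers susceptibility,'' and it deserves the one-line justification. The trade-off: your argument is more self-contained and elementary (a quadratic plus two convexity facts, no eigenvalue formula with a square root), and it isolates exactly where $\alpha\le 1$ enters, whereas the paper's detour through Proposition~\ref{prop:rho-alpha} buys the additional structural statement that $\rho$ is decreasing and convex in $\alpha$, which the authors also use for their general remarks on assortativity.
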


\begin{proof}
Let us write $\rho(\alpha, \beta)$ to emphasize the dependence in the
two parameters. We need to show that for any $\alpha \in [0, 1]$,
$\rho(\alpha, \beta) \ge \rho(\alpha, 0)$. From Proposition~\ref{prop:rho-alpha}, 
for any $\alpha \in [0,1]$ and $\beta \in [\tfrac{-1}{p_2},\tfrac{1}{p_1}]$
\[
    \rho(\alpha, \beta) \ge \rho(1, \beta) \coloneqq p_1 \Sigma_1(\beta) + p_2 \Sigma_2(\beta),
\]
where, for $i \in \{1, 2\}$, $\Sigma_i(\beta) = \E[\phi(T_i(\beta))] / \E[T_i(\beta)]$,
and $\phi \colon a\mapsto\int_0^a\sigma(u)\mathrm{d}u$. According
to \eqref{eq:fixedDoses} we have
\[
    \frac{p_1}{\E[T_1(\beta)]} + \frac{p_2}{\E[T_2(\beta)]} = \frac{1}{\E[T]},
\]
and we can use that $\phi$ is convex to obtain that 
\begin{align*}
    \rho(1, \beta) 
    &= 
    p_1 \frac{\E[\phi(T_1(\beta))]}{\E[T_1(\beta)]} + 
    p_2 \frac{\E[\phi(T_2(\beta))]}{\E[T_2(\beta)]} \\
    &\ge
    \frac{1}{\E[T]} \E\Big[ \phi\Big(\frac{p_1\E[T]}{\E[T_1(\beta)]} T_1(\beta)
    + \frac{p_2\E[T]}{\E[T_2(\beta)]} T_2(\beta)\Big) \Big] \\
    &= \frac{\E[\phi(T)]}{\E[T]} = \rho(1, 0) = \rho(\alpha, 0).
\end{align*}
In the last line, we have again used that $M$ is a multiple of a
stochastic matrix when $\beta = 0$, so that $\rho(\alpha, 0)$ is
constant. Overall, we have shown that $\rho(\alpha, \beta) \ge
\rho(\alpha, 0)$.
\end{proof}

%%%%%%%%%%%%%%%%%%%%%%%%%%%%%%%%%%%%%%%%
\section{Well-posedness of the PDE system}
\label{sec:PDE}

In this section we provide proofs related to the solution of the PDE
system~\eqref{eq:main}. We first prove in Section~\ref{sec:proof-wellposedness} 
the existence and uniqueness of the solution. In Section~\ref{sec:proofPDE}, 
we prove Proposition~\ref{prop:PDErepresentation}, which identifies the
solution  as the limit of the stochastic model when the size of the
population goes to infinity.

\subsection{Proof of Proposition~\ref{prop:uniqueness-sol-PDE}}
\label{sec:proof-wellposedness}

% \begin{proposition}\label{prop:uniqueness-sol-PDE}
% Equation~\eqref{eq:main} has a unique weak solution on the  Skorokhod
% space $\mathbb{D}(\R^+,\R^+)$, when the following conditions hold
% \begin{itemize}
%     \item there exists $\lambda_{\max}>0$ such that, $\forall a\geq 0$,   
%         $\E[\lambda(a)]\leq \lambda_{\max}$,
%     \item the density distribution functions of $T_I$ and $T_V$ and the functions
%         \begin{equation} \label{eq:boundedIC}
%         t\mapsto \int_0^\infty \mu_I(t+b)\mathrm{e}^{-\int_b^{b+t}\mu_I(u)\diff u}h_I(b)\diff b
%         +\int_0^\infty \mu_V(t+b)\mathrm{e}^{-\int_b^{b+t}\mu_V(u)\diff u}h_S(b)\diff b
%     \end{equation}
%     are bounded.
% \end{itemize}
% \end{proposition}

% \begin{proof}%[Proof of Proposition~\ref{prop:uniqueness-sol-PDE}]

We recall that the  Skorokhod space $\mathbb{D}(\R^+,\R^+)$ is the space
of right continuous with left limits functions on $\R^+$ with values in
$\R^+$ (see \cite{billingsley} for more details).

Using the definition of a weak solution, we can reformulate equation
\eqref{eq:main} as a set of Volterra equations. The force of infection
$\Lambda$ is defined by \eqref{eq:Lambda}, 
\[
    \forall t \ge 0,\quad \Lambda(t) = \int_0^{\infty} \E[\lambda(a) \mid T_I>a]I(t,a)\diff a,
\]
and the  mean susceptibility of the population by
\begin{equation}
    \forall t \ge 0,\quad \Sigma(t)=\int_0^{\infty}\E_{t,a}[\sigma(a)]S(t,a)\diff a,
\end{equation}
where $\E_{t,a}[\sigma(a)]$ is given by \eqref{eq:biasExpectation}.
By \eqref{eq:main}, for $t\geq0$, $I(t,0)=\Lambda(t)\Sigma(t)$ and 
\[
S(t,0)= \int_0^\infty \mu_I(a) I(t,a) \diff a + \int_0^\infty \mu_V(a) S(t,a) \diff a.
\]
An individual of age $a$ at time $t$ was of age $0$ at time $t-a$ and no new event (recovery, vaccination, infection) occurred between $t-a$ and $t$. We thus deduce
\begin{align*}
\Lambda(t)&=\int_0^t\E[\lambda(a)] I(t-a,0)\diff a+I_0\int_t^{\infty}\E[\lambda(a)]h_I(a-t)\diff a
\end{align*}
and
\begin{align*}
    \Sigma(t)
    &= \int_0^t 
       \E\left[ \sigma(a) \mathrm{e}^{-\int_0^a \Lambda(t-a+u) \sigma(u) \diff u} \right]
       e^{-\int_0^a\mu_V(u)\diff u}S(t-a,0) \diff a \\
    &+ \int_t^\infty 
       \E\left[\sigma(a)\mathrm{e}^{-\int_{a-t}^a \Lambda(t-a+u) \sigma(u) \diff u} \right]
       e^{-\int_{a-t}^a \mu_V(u) \diff u} (1-I_0) h_S(a-t) \diff a
\end{align*}
where we used $\E[\lambda(a)\indic_{\{T_I>a\}}]=\E[\lambda(a)]$ in the
expression of $\Lambda$, $\Lambda\equiv 0$ on the negative values and
the vaccination rate for the initial susceptible individuals is
$\mu_V(u)\indic_{\{u>a-t\}}$ for individuals of age $a\in(t,+\infty)$ in
the expression of $\Sigma$ (see Section~\ref{SS:formalModel}). 

Using the same arguments for $S(t,0)$, we remark that the pair
$(I(t,0),S(t,0); t\geq 0)$ is solution of the system of integral
equations defined by, for $t\geq 0$,
\begin{align}\label{eq:IES}
\begin{split}
    x(t) &= L(t) \int_0^{\infty} \E\left[ \sigma(a) e^{-\int_0^a L(t-a+u)\sigma(u)\diff u} \right]
         e^{-\int_{(a-t)\vee 0}^a \mu_V(u) \diff u} y(t-a) \diff a, \\
    y(t) &= \int_0^\infty \mu_I(a) e^{-\int_{(a-t)\vee 0}^a \mu_I(u) \diff u} x(t-a) \diff a \\
         &\qquad + \int_0^\infty \mu_V(a) e^{-\int_{(a-t)\vee 0}^a \mu_V(u) \diff u} 
            \E\left[ e^{-\int_0^a L(t-a+u) \sigma(u) \diff u} \right] y(t-a) \diff a,
\end{split}
\end{align}
with $x(t)=I_0h_I(-t)$, $y(t)=(1-I_0)h_S(-t)$ for $t<0$ and $L(t)=\int_0^\infty \E[\lambda(a)]x(t-a)\diff a$ for $t\geq0$, 
$L(t)=0$ for $t<0$.
We observe that $(x,y)$ is a solution of \eqref{eq:IES} if and only if 
\begin{align*}
    I(t,a)&=\mathrm{e}^{-\int_{(a-t)\vee 0}^a\mu_I(u)\diff u}x(t-a)\\
    S(t,a)&=\mathrm{e}^{-\int_{(a-t)\vee 0}^a\mu_V(u)\diff u} \E\left[\mathrm{e}^{-\int_0^a L(t-a+u)\sigma(u)\diff u}\right] y(t-a)
\end{align*}
is a weak solution of \eqref{eq:main}.

\medskip \noindent
\textbf{A priori estimates.}
Let $(x,y)$ be nonnegative functions, solution  of the system
\eqref{eq:IES}. We introduce
\begin{align*}
    z(t)=&\int_0^\infty\mathrm{e}^{-\int_{(a-t)\vee 0}^a\mu_I(u)\diff u}x(t-a) \diff a\\
    &\hskip .5cm+\int_0^{\infty}\E\left[\mathrm{e}^{-\int_0^aL(t-a+u)\sigma(u)\diff u}\right]\mathrm{e}^{-\int_{(a-t)\vee 0}^a\mu_V(u)\diff u}y(t-a)\diff a,
\end{align*}
which can also be written, using the changes of variables $b=t-a$ on
$[0,t]$ and $b=a-t$ on $[0,\infty)$,
\begin{align*}
    z(t)&=\int_0^t\mathrm{e}^{-\int_0^{t-b}\mu_I(u)\diff u}x(b) \diff b+I_0\int_0^\infty\mathrm{e}^{-\int_b^{b+t}\mu_I(u)\diff u}h_I(b) \diff b\\
    &+\int_0^{t}\E\left[\mathrm{e}^{-\int_0^{t-b}L(b+u)\sigma(u)\diff u}\right]\mathrm{e}^{-\int_0^{t-b}\mu_V(u)\diff u}y(b)\diff b
    \\
    &+(1-I_0)\int_0^{\infty}\E\left[\mathrm{e}^{-\int_b^{b+t}L(u-b)\sigma(u)\diff u}\right]\mathrm{e}^{-\int_b^{b+t}\mu_V(u)\diff u}h_S(b)\diff b.
\end{align*}
Computing the first derivative of $z$, we observe that $z'(t)=0$.
Computing $z(0)$, we then have, $\forall t\geq 0$,
\begin{multline}\label{eq:z=1}
    1=\int_0^\infty\mathrm{e}^{-\int_{(a-t)\vee 0}^a\mu_I(u)\diff u}x(t-a) \diff a \\
    +\int_0^{\infty}\E\left[\mathrm{e}^{-\int_0^aL(t-a+u)\sigma(u)\diff u}\right]\mathrm{e}^{-\int_{(a-t)\vee 0}^a\mu_V(u)\diff u}y(t-a)\diff a.
\end{multline}

As $\sigma \in [0, 1]$, we easily deduce from \eqref{eq:IES} and the
above equation that $x(t) \leq L(t)$. Moreover, by assumption
$\E[\lambda(a)] \le \lambda_{\max}$. Consequently, by
definition of $L$, we have
\[
    x(t)\leq \lambda_{\max} \left(\int_0^t x(a)\diff a+I_0\right).
\]
Using Gronwall's Lemma, we obtain for $t \geq 0$, 
$x(t) \leq I_0 \lambda_{\max} e^{\lambda_{\max} t}$
and thus $L(t)\leq I_0 \lambda_{\max} e^{\lambda_{\max} t}$. 

Let $T>0$. Since the density distribution function of $T_V$ is locally
bounded, there exists $C_T > 0$ such that, $\forall t\geq 0$
\begin{align*}
  y(t)&\leq \max_{t\in[0,T]}x(t)+ C_T \int_0^t y(b)\diff b\\
&+\int_0^\infty \mu_I(t+b)\mathrm{e}^{-\int_b^{b+t}\mu_I(u)\diff u}h_I(b)\diff b
+\int_0^\infty \mu_V(t+b)\mathrm{e}^{-\int_b^{b+t}\mu_V(u)\diff u}h_S(b)\diff b.
\end{align*}
Using Gronwall's inequality, {and assumptions of the proposition,} we conclude that $y$ is locally bounded on
$\R^+$.

\medskip\noindent
\textbf{Existence and uniqueness of solutions.}
We now prove the uniqueness of the solution.
Let $(x_1,y_1)$ and $(x_2,y_2)$ be two solutions of \eqref{eq:IES}. 

From \eqref{eq:IES} we have 
\begin{align*}
    &\abs{x_1(t) - x_2(t)}  
    \le \abs{L_1(t) - L_2(t)} 
    \int_0^{\infty} \E\left[ \sigma(a) e^{-\int_0^a L_1(t-a+u)\sigma(u)\diff u} \right]
         e^{-\int_{(a-t)\vee 0}^a \mu_V(u) \diff u} y_1(t-a) \diff a \\
    &\qquad+ L_2(t)
    \int_0^{\infty} \E\left[ \sigma(a) \abs*{e^{-\int_0^a L_1(t-a+u)\sigma(u)\diff u} 
         - e^{-\int_0^a L_2(t-a+u)\sigma(u)\diff u} }\right]
         e^{-\int_{(a-t)\vee 0}^a \mu_V(u) \diff u} y_1(t-a) \diff a \\
    &\qquad+ L_2(t)
    \int_0^{\infty} \E\left[ \sigma(a) e^{-\int_0^a L_1(t-a+u)\sigma(u)\diff u} \right]
        e^{-\int_{(a-t)\vee 0}^a \mu_V(u) \diff u} \abs{y_1(t-a)-y_2(t-a)} \diff a \\
    &\le \lambda_{\max} \int_0^t \abs{x_1(a) - x_2(a)}  \diff a 
    + L_2(t) \int_0^{\infty} \int_0^a \abs*{L_1(t-a+u) - L_2(t-a+u) }\diff u\: y_1(t-a) \diff a \\
    &\qquad+ L_2(t) \int_0^t \abs{y_1(a)-y_2(a)} \diff a.
\end{align*}
For the first term we have used Equation \eqref{eq:z=1} to bound the
integral, and that $L_1(t) - L_2(t) = \int_0^t \E[\lambda(a)](x_1(a)-x_2(a)) \diff a$. 
For the second term we have used that $\abs{e^{-u}-e^{-v}} \le
\abs{u-v}$. For the third time we have used that $y_1(t) = y_2(t)$ for $t
< 0$. We further bound the second term by noting that 
\begin{align*}
    \int_0^\infty &\int_0^a \abs{L_1(t-a+u) - L_2(t-a+u)} \diff u \: y_2(t-a) \diff a\\
    &= \int_0^t \int_{t-a}^t \abs*{L_1(v) - L_2(v)} \diff v \: y_1(t-a) \diff a 
    + \int_0^t \abs*{L_1(u) - L_2(u)} \diff u \cdot \int_t^\infty y_1(t-a) \diff a  \\
    &\le \int_0^t \abs*{L_1(u) - L_2(u)} \diff u \cdot 
    \Big( t \sup_{s \in [0,t]} \abs{y_1(s)} + (1-I_0) \big).
\end{align*}
Using the expression of $L(t)$ and Fubini's theorem we have
\begin{equation} \label{eq:boundIntL}
    \int_0^t \abs{L_1(u)-L_2(u)} \diff u 
    \leq t \lambda_{\max} \int_0^t \abs{x_1(b)-x_2(b)} \diff b.
\end{equation}
Therefore, combining all the previous estimates we see that there exists 
$C_T$ such that for $t \le T$, 
\begin{equation} \label{eq:estimateX}
    \abs{x_1(t) - x_2(t)} 
    \le C_T \Big(\int_0^t \abs{x_1(s) - x_2(s)} \diff s + \int_0^t
    \abs{y_1(s) - y_2(s)}\Big) \diff s.
\end{equation}

In a similar way, \eqref{eq:IES} yields that 
\begin{align*}
    &\abs{y_1(t) - y_2(t)}
    \le \int_0^\infty \mu_I(a) e^{-\int_{(a-t)\vee 0}^a \mu_I(u) \diff u}
    \abs{x_1(t-a)-x_2(t-a)} \diff a \\
    &\qquad + \int_0^\infty \mu_V(a) e^{-\int_{(a-t)\vee 0}^a \mu_V(u) \diff u} 
            \E\left[ \abs*{e^{-\int_0^a L_1(t-a+u) \sigma(u) \diff u} -
                e^{-\int_0^a L_2(t-a+u) \sigma(u) \diff u}} \right] y_1(t-a) \diff a \\
    &\qquad + \int_0^\infty \mu_V(a) e^{-\int_{(a-t)\vee 0}^a \mu_V(u) \diff u} 
            \E\left[ e^{-\int_0^a L_1(t-a+u) \sigma(u) \diff u} \right] \abs{y_1(t-a) - y_2(t-a)} \diff a \\
    &\le \int_0^t \mu_I(t-b)e^{-\int_0^{t-b} \mu_I(u) \diff u} \abs{x_1(b)-x_2(b)} \diff b \\
    &\qquad+ \int_0^t \mu_V(a) e^{-\int_0^a \mu_V(u) \diff u} 
    \int_{t-a}^t \abs*{L_1(b) - L_2(b)} \diff b \: y_1(t-a) \diff a \\
    &\qquad + \int_0^t \abs*{L_1(b) - L_2(b)} \diff b \cdot
    (1-I_0) \int_0^\infty \mu_V(t+b) e^{-\int_b^{t+b} \mu_V(u) \diff u} 
    h_S(b) \diff a  \\
    &\qquad+ \int_0^t \mu_V(t-b) e^{-\int_{t-b}^t \mu_V(u) \diff u} \abs{y_1(b) - y_2(b)} \diff b.
\end{align*}
For the first and third terms we have used that $x_1(t) = x_2(t)$ and
$y_1(t) = y_2(t)$ for $t < 0$. For the second term we have split the
integrals for $a > t$ and $a \le t$.

Our assumptions entail that $\mu_V(a) e^{-\int_0^a \mu_V(u) \diff u}$ and
$\mu_I(a) e^{-\int_0^a \mu_I(u) \diff u}$ are bounded. Therefore, using the previous
inequality, this bound, our assumption \eqref{eq:boundedIC} on the
contribution of the initial individuals together with \eqref{eq:boundIntL}
yield that there exists $C'_T$ such that, for $t \le T$,
\begin{equation} \label{eq:estimateY}
    \abs{y_1(t) - y_2(t)} \le C'_T \int_0^t \abs{x_1(s) - x_2(s)} + \abs{y_1(s) - y_2(s)} \diff s
\end{equation}
The estimates \eqref{eq:estimateX} and \eqref{eq:estimateY} on $x$ and $y$ 
combined with Gronwall's inequality show that $x_1(t) = x_2(t)$ and
$y_1(t) = y_2(t)$ for all $t \ge 0$, proving uniqueness of the solution
to \eqref{eq:IES}. The existence of a solution is proved by a classical Picard method. Let $T>0$ be fixed. For $n\geq 0$, we define by induction the sequences $(L_n)_{n\geq 0}$, $(x_n)_{n\geq 0}$ and $(y_n)_{n\geq 0}$: for $t\in [0,T]$
\begin{align*}
L_0(t)&=I_0\int_0^\infty\E[\lambda(b+t)]h_I(b)\diff b\\
x_0(t)&=I_0L_0(t)\int_0^\infty e^{-\int_{b}^{b+t}\mu_V(u)\diff u}\E\left[\sigma(b+t)e^{-\int_{0}^{t}L_0(u)\sigma(u)\diff u}\right]h_I(b)\diff b\\
y_0(t)&=I_0\int_0^{\infty}\mu_I(b+t)e^{-\int_{b}^{b+t}\mu_I(u)\diff u}h_I(b)\diff b\\
&\hskip 1cm
+(1-I_0)\E\left[e^{-\int_{0}^{t}L_0(u)\diff u}\right]\int_0^\infty\mu_V(b)e^{-\int_{b}^{b+t}\mu_V(u)\diff u}h_S(b)\diff b\\
L_{n+1}(t)&=I_0\int_0^\infty\E[\lambda(b+t)]h_S(b)\diff b+\int_0^t\E[\lambda(a)]x_n(t-a)\diff a\\
x_{n+1}(t)&=I_0L_{n+1}(t)\int_0^\infty \E\left[\sigma(b+t)e^{-\int_{0}^{t}L_{n+1}(u)\sigma(u)\diff u}\right]e^{-\int_{b}^{b+t}\mu_V(u)\diff u}h_I(b)\diff b\\&\hskip 1cm +L_{n+1}(t)\int_0^{t}\E\left[\sigma(a)e^{-\int_0^{a}L_{n+1}(t-a+u)\sigma(u)\diff u}\right]e^{\int_{0}^{a}\mu_V(u)\diff u}y_n(t-a)\diff a\\
y_{n+1}(t)&=I_0\int_0^{\infty}\mu_I(b+t)e^{-\int_{b}^{b+t}\mu_I(u)\diff u}h_I(b)\diff b
\\
&+(1-I_0)\E\left[e^{-\int_{0}^{t}L_{n+1}(u)\diff u}\right]
\int_0^\infty\mu_V(b+t)e^{-\int_{b}^{b+t}\mu_V(u)\diff u}h_S(b)\diff b\\
&\hskip -1cm +\int_0^{t}\mu_I(a)e^{-\int_{0}^{a}\mu_I(u)\diff u}x_n(t-a)\diff a
+\int_0^t\mu_V(a)e^{-\int_{0}^a\mu_V(u)\diff u}\E\left[e^{-\int_{0}^aL_{n+1}(t-a+u)\diff u}\right]y_n(t-a)\diff a.
\end{align*}
By iteration and  using Equation \eqref{eq:estimateY}, we prove that 
\begin{multline*}
|x_{n+1}(t)-x_n(t)|+|y_{n+1}(t)-y_n(t)|  \\
\leq C_T^n\int_0^t\int_0^{t_{n-1}}\cdots\int_0^{t_1}|x_{1}(a)-x_0(a)|+|y_{1}(a)-y_0(a)|\diff a\diff t_1\ldots \diff t_{n-1}
\end{multline*}
and then, denoting by $\|.\|_{[0,T]}$ the uniform distance  on the interval $[0,T]$,
\[
\|x_{n+1}-x_n\|_{[0,T]}+\|y_{n+1}-y_n\|_{[0,T]}\leq \frac{C_T^nT^n}{n!}\left(\|x_{1}-x_0\|_{[0,T]}+\|y_{1}-y_0\|_{[0,T]}\right).
\]
The upper-bound is the general term of a converging series,  and we deduce that the sequences $(x_n)_{n\geq 0}$ and $(y_n)_{n\geq 0}$ converge  on the interval $[0,T]$ to a solution of \eqref{eq:IES}. We proved existence and uniqueness of a solution to \eqref{eq:IES} on the interval $[0,T]$ for any $T>0$, we then deduce the existence and uniqueness on $\R^+$.

% \end{proof}

\subsection{Proof of Proposition~\ref{prop:PDErepresentation}}
\label{sec:proofPDE}

% In this section, we identify the solution of the PDE~\eqref{eq:main} with the limit in large population of the stochastic model.
% \begin{proof}[Proof of Proposition~\ref{prop:PDErepresentation}]

We recall that  $(\lambda^*(t), \sigma^*(t),A^*(t), C^*(t))$ is the solution to
    the McKean--Vlasov equation~\eqref{eq:McKeanVlasov}.
We start by deriving the equation for $I(t, \cdot)$. Let us compute,
for some test function $\varphi$,
\begin{equation*}
    \E\big[ \varphi(A^*(t)) \indic_{\{C^*(t) = I\}} \big] 
    = \sum_{k \ge 0} \E\big[ \indic_{\{K^*(t) = k, C^*_{k} = I\}} \varphi(t-\tau^*_{k}) \big].
\end{equation*}
We have that
\[
    \{ K^*(t) = k \} \cap \{ C^*_{k} = I \} = \{ \tau^*_{k} \le t <
    \tau^*_{k}+T_{I,k} \} \cap \{ C^*_{k} = I \}.
\]
For $k = 0$, by our choice of initial condition, see [IC] in Section~\ref{SS:model},
\[
    \E\big[ \varphi(A^*(t)) \indic_{\{C^*(t) = I, K^*(t)=0\}} \big] 
    = I_0 \int_0^\infty h_I(a) \varphi(t+a) \exp\Big( -\int_a^{t+a} \mu_I(u) \diff u \Big) \diff a.
\]
For $k \ge 1$, $T_{I,k}$ is independent of $\tau^*_{k}$ and $C^*_{k}$
so that
\begin{align*}
    \E\big[ \varphi(A^*(t)) &\indic_{\{C^*(t) = I, K^*(t) \ge 1\}} \big] \\
    &= \sum_{k \ge 1} \E\Big[ \indic_{\{\tau^*_{k} \le t, C^*_k = I\}}
    \varphi(t-\tau^*_{k}) \exp\Big(- \int_0^{t-\tau^*_{k}} \mu_I(u) \diff u \Big) \Big]\\
    &= \E\Big[ \int_{[0,t]} \varphi(t-a) \exp\Big(- \int_0^{t-a} \mu_I(u) \diff u \Big)
    P_I(\diff a) \Big]
\end{align*}
where $P_I$ is the point process of infection times, which is the random
measure on $[0, \infty)$ defined as 
\[
    P_I(B) = \sum_{k \ge 1} \indic_{\{\tau^*_{k} \in B\}} \indic_{\{C^*_{k} = I\}}.
\]
Since infections occur at rate $\Lambda^*(t) \sigma^*(t)$ for $t \ge 0$,
the density of the intensity measure of $P_I$ is $\Lambda^*(t)
\E[\sigma^*(t)]$ so that  
\begin{equation*}
    \E\big[ \varphi(A^*(t)) \indic_{\{C^*(t) = I\}} \big] 
    = \begin{multlined}[t]
    \int_0^t \Lambda^*(a) \Sigma^*(a) \varphi(t-a) \exp\Big(- \int_0^{t-a} \mu_I(u) \diff u \Big)
    \diff a \\
    + I_0 \int_0^\infty h_I(a) \varphi(t+a) \exp\Big(- \int_a^{t+a} \mu_I(u) \diff u \Big)
    \diff a
    \end{multlined} \\
\end{equation*}
with $\Sigma^*(t) = \E[\sigma^*(t)]$. This shows that the density of 
$A^*(t)$ on $\{C^*(t) = I\}$ is 
\[
    \forall a \le t,\quad I(t,a) = \Lambda^*(t-a) \Sigma^*(t-a) \exp\Big(- \int_0^a \mu_I(u) \diff u \Big)
\]
and 
\[
    \forall a \ge t,\quad I(t,a) = I_0 h_I(a-t) \exp\Big(- \int_{a-t}^a \mu_I(u) \diff u \Big)
\]
which is the weak solution to
\begin{align*}
    \partial_t I(t,a) + \partial_a I (t,a)&= -\mu_I(a) I(t,a)\\
    I(t,0) &= \Lambda^*(t) \Sigma^*(t) \\
    I(0,a) &= I_0 h_I(a).
\end{align*}
We obtain the first part of the PDE limit \eqref{eq:main} ($\Sigma^*$ will be identified at the end of the proof). 
We now turn to the density of susceptible individuals. As previously
\[
    \{ K^*(t) = k \} \cap \{ C^*_{k} = S \} = \{ \tau^*_{k} \le t <
    \tau^*_{k}+T_{V,k} \wedge Z^*_{k} \} \cap \{ C^*_{k} = S \},
\]
so that for $k = 0$, recalling our initial condition [IC] and the
expression for the reinfection time $Z^*_{k}$ given by \eqref{eq:reinfectionTime} with a  $\Lambda^*$ instead of $\Lambda^N$ yields
that 
\begin{multline*}
    \E\big[ \varphi(A^*(t)) \indic_{\{C^*(t) = S, K^*(t) = 0\}} \big] \\
    = (1-I_0) \int_0^\infty h_S(a) \varphi(t+a) e^{- \int_a^{t+a} \mu_V(u) \diff u}
        \E\Big[e^{-\int_a^{t+a} \Lambda^*(u-a) \sigma(u) \diff u}\Big] \diff a.
\end{multline*}
Indeed, from \eqref{eq:reinfectionTime} and by independence, we notice that 
\begin{align*}
    \P(\min(T_{V,0}, Z^*_{0})>t+a \mid T_{V,0}>a)&=\P(T_{V,0}>t+a \mid T_{V,0}>a)\P( Z^*_{0}>t+a)\\
    &=e^{- \int_a^{t+a} \mu_V(u) \diff u}\,\P\left(\int_0^{t+a}\Lambda^*(-a+u)\sigma_0(u)\diff u<E_{0}\right)\\
      &=e^{- \int_a^{t+a} \mu_V(u) \diff u}  \E\Big[e^{-\int_a^{t+a} \Lambda^*(u-a) \sigma(u) \diff u}\Big].
\end{align*}
Similarly, for $k \ge 1$, using the independence of the various variables
\begin{multline*}
    \E\big[ \varphi(A^*(t)) \indic_{\{C^*(t) = S, K^*(t) \ge 1\}} \big] \\
    = \E\Big[ \int_{[0,t]} \varphi(t-a) e^{- \int_0^{t-a} \mu_V(u) \diff u}
        \E\Big[e^{-\int_0^{t-a} \Lambda^*(a+u)\sigma(u) \diff u}\Big]
    P_S(\diff a) \Big]
\end{multline*}
with the point process $P_S$ defined on $(0, \infty)$ as 
\[
    P_S(B) = \sum_{k \ge 1} \indic_{\{\tau^*_{k} \in B\}} \indic_{\{C^*_{k} = S\}}.
\]
The intensity of $P_S$ has a density that we denote by $p_S$. Recall
equation \eqref{eq:biasExpectation}, the next step is to note that 
\begin{equation} \label{eq:derivativeAttackRate}
    \E\Big[ e^{-\int_0^a \Lambda^*(t-a+u) \sigma(u) \diff u} \Big] = 
    e^{-\int_0^a \Lambda^*(t-a+u) \E_{t-a+u,u}[\sigma(u)] \diff u}.
\end{equation}
This is equivalent to showing that
\[
    \int_0^a \Lambda^*(t-a+u) \E_{t-a+u,u}[\sigma(u)] \diff u
    = - \log \E\Big[ e^{-\int_0^a \Lambda^*(t-a+u) \sigma(u) \diff u} \Big].
\]
Applying the operator $\partial_t + \partial_a$ to both sides leads to 
\[
    \Lambda^*(t) \E_{t,a}[\sigma(a)] = 
    \Lambda^*(t) \E\Big[ \sigma(a) e^{-\int_0^a \Lambda(t-a+u) \sigma(u) \diff u} \Big]
    \: \Big/\:\E\Big[ e^{-\int_0^a \Lambda(t-a+u) \sigma(u) \diff u} \Big]
\] 
and we recover the expression for $\E_{t,a}$, see \eqref{eq:biasExpectation}.

Therefore, combining the previous expressions,
\begin{align*}
    \E\big[ \varphi(A^*(t)) &\indic_{\{C^*(t) = S\}} \big]\\
    &= (1-I_0) \int_t^\infty h_S(a-t) \varphi(a) e^{- \int_{a-t}^a \mu_V(u) \diff u}
        \E\Big[e^{-\int_{a-t}^a \Lambda^*(t-a+u) \sigma(u) \diff u}\Big] \diff a \\
    &+ \int_0^t \varphi(a) e^{- \int_0^a \mu_V(u) \diff u-\int_0^a \Lambda^*(t-a+u) \E_{t-a+u,u}[\sigma(u)] \diff u}
        p_S(t-a) \diff a
\end{align*}
and we recover the weak solution of 
\begin{align*}
    \partial_t S(t,a) + \partial_a S(t,a) &= -\mu_V(a)S(t,a) - \Lambda^*(t) \E_{t,a}[\sigma(a)] S(t,a)\\
    S(t,0) &= p_S(t)\\
    S(0, a) &= (1-I_0) h_S(a).
\end{align*}
Our last task is to compute $\Lambda^*(t)$, $\Sigma^*(t)$, and 
$p_S(t)$. For the latter quantity, by construction of the process,
for any $t \ge 0$,
\[
    \E\big[ P_S([t, t+\diff t]) \mid A^*(t), C^*(t) \big] = \left(
    \mu_I(A^*(t)) \indic_{\{C^*(t) = I\}} + \mu_V(A^*(t)) \indic_{\{C^*(t) =
    S\}}\right) \diff t.
\]
Therefore
\[
    \forall t \ge 0, \quad p_S(t) = \int_0^\infty \mu_I(a) I(t,a) \diff a
    + \int_0^\infty \mu_V(a) S(t,a) \diff a.
\]
Similarly, by conditioning on $A^*(t)$, 
\[
    \forall t \ge 0,\quad 
    \Lambda^*(t) = \E[\lambda^*(t)] = \int_0^\infty I(t,a) \E\big[
    \lambda(a) \mid T_I > a\big] \diff a
\]
and 
\begin{align*}
    \forall t \ge 0,\quad
    \Sigma^*(t) &= \E[\sigma^*(t)] 
    = \int_0^\infty S(t,a) \E[\sigma(a) \mid T_V > a, Z > a] \diff a \\
    &= \int_0^\infty S(t,a) \E_{t,a}[\sigma(a)]\diff a.
\end{align*}

% \end{proof}

\section{Summary and discussion}\label{sec:discussion}

\paragraph{Summary.}
In this work we have proposed an individual-based model to study the
effect of recurrent vaccination on the establishment of an endemic
equilibrium, in a population with waning immunity. Our model incorporates
memory effects both for the transmission rate during an infection and for the
subsequent immunity, and takes into account the stochasticity at the
individual level for these two processes. By deriving the large
population size limit of the model and analysing its equilibria, we have
obtained a simple criterion for the existence of an endemic equilibrium.
This criterion depends jointly on the shape of the rate of immunity loss
and on the distribution of the time between two booster doses. In other
words, in the context of recurrent vaccination and waning immunity, 
what drives the result of a vaccination-policy is a combination of the
efficiency of the vaccine itself at blocking transmissions, and of the
way in which booster doses are distributed in the population. The
expression we obtain relates directly to the average immunity level
maintained by vaccinating recurrently the population, which is a relation
that we expect to hold for a broad class of models with similar
characteristics.

One general public health conclusion that we can draw from our work is
that, for the same average number of vaccine doses available, vaccination
strategies where the time between booster shots are more evenly spaced
(at the individual level) perform better at blocking transmissions.
A similar conclusion was reached recently by \cite{ElKhalifi2024}
for a related model.
Intuitively, irregularly spread booster doses lead to some longer time
intervals without vaccination, and the resulting high susceptibility
allows the disease to spread more efficiently. Deriving further
conclusions from our model would require to add some restrictions on the
distributions of $T_V$ and $\sigma$ that would reflect the
characteristics of a particular disease and vaccine.

Finally, we have studied two specific situations in more details. First,
we have computed an expression for the critical fraction of the
population required to adhere to the vaccination policy to eradicate the
disease (see \eqref{eq:criticalFraction}). This expression is reminiscent
of a well-known threshold for preventing an endemic state with an
imperfect vaccine \citep{anderson1985vaccination}. In the context of
recurrent vaccination, the efficiency of the vaccine is replaced by the
average susceptibility obtained by vaccinating individuals in the absence
of disease. Second we have studied the consequences of uneven vaccine
access in a population, and concluded that fair vaccine allocation is the
optimal strategy to prevent endemicity (see Proposition~\ref{prop:fairAllocation}).

\paragraph{Model assumptions.}
Our model is formulated in terms of infectiousness and susceptibility,
which are two phenomenological quantities that result from the complex
interaction between the pathogen and the host immune system. If this
interaction were modeled explicitly as in many existing works on viral
dynamics \citep{heffernan2008host, heffernan2009implications, goyal2020potency,
neant2021modeling}, infectiousness would relate to the viral load, and
susceptibility to the level of immune cells or circulating antibodies. 
Since we have left the susceptibility and infectiousness be general
random functions, our model should encompass many possible such host-pathogen
models. There are two assumptions that we have made about $\lambda$ and
$\sigma$ that could be easily relaxed mathematically, but would lead to a
more complicated model. First, we assumed that the susceptibility
curve following an infection is independent of the infectiousness curve
during that infection. A typical situation where this assumption would
fail is if a more severe infection leads both to a larger infectiousness
and to a higher level of immunity (and thus a lower susceptibility).
Second, we assumed that infection and vaccination lead to the same
susceptibility in distribution. We expect a law of large number
similar to Theorem~\ref{thm:lawLargeNumbers} to hold if these
assumptions are relaxed, with a similar criterion for the existence of
an endemic equilibrium and mild modifications of the limit equations. 
However, our mathematical results rely crucially on the strong assumption
that individuals (and thus their immune system) keep no memory of past
infections or vaccinations: at each reinfection or vaccination, the
subsequent infectiousness and susceptibility are sampled independently
and according to the same law. In particular, the expression
\eqref{eq:sigma} for the herd immunity threshold follows from the fact
that vaccinations form a renewal process, which is a consequence of this
absence of memory from past vaccinations. Relaxing this assumption would
require a completely different approach to our problem. Nonetheless, the
key quantity in our model is the stationary susceptibility of a typical
individual, obtained by letting an individual get vaccinated only for a
long period of time. It might be the case that other models displaying a
similar stationary behavior have the same qualitative properties as the
one investigated here.

The persistence of a disease requires a continual replenishment of
susceptible individuals to sustain the epidemic. In our
model, this influx of susceptibility comes exclusively from waning
immunity. Two other important causes for an increase in
susceptibility that we have neglected are the birth of new individuals
with no immunity and the pathogen evolution to escape immunity.
We expect that, as long as the population size is stable and newborns
start being vaccinated rapidly, demographic effects (that we have
neglected by considering a closed population) should not impact our
conclusions to a large extent. The key quantity that controls the
establishment of an endemic equilibrium in our model is the level
of population immunity in the absence of disease, which should be mostly
driven by vaccination if the typical time between two vaccine doses is
small compared to the lifetime of individuals. Taking into account
pathogen evolution is, however, a more challenging task that would
require further investigation and modeling. Though, note that a model
structured by time-since-recovery similar to the one we consider here
has been proposed to study the increase in susceptibility due to
antigenic drift in influenza strains \citep{pease1987evolutionary}.

\paragraph{Discussion.} 
In the second half of our work, we have used the endemic threshold $1/\Sigma$ to
quantify the efficiency of a given vaccination policy. This criterion has
the advantage of having a clear interpretation (in terms of the average
level of susceptibility maintained by vaccination), of being easy to
compute and of depending only on a few average quantities of the model:
the basic reproduction number $R_0$, the expected susceptibility at a
given time $\E[\sigma(a)]$, and the distribution of $T_V$. Another
interesting indicator of the impact of a vaccination policy is the
so-called endemic level, defined as the prevalence of the disease at the
endemic equilibrium. Ultimately, it is this endemic level that public
health measures try to control, to reduce the burden of the disease in
the population. In our model, when an endemic equilibrium exists, the
endemic level is given by $x \E[T_I] / R_0$, where $x$ solves $F_\e(x) =
R_0$ as in Proposition~\ref{prop:endemicFunction}. The endemic level is
therefore only implicitly defined, which makes it more complicated to
study both analytically and numerically. Investigating the impact of the
vaccination policy on the endemic level, though important, would
therefore require further work, and the conclusions reached in
Section~\ref{SS:endemicity_2pop} could be altered by using this endemic
level as a criterion for the efficiency of vaccination instead of the
endemic threshold. Note that the question of the impact of the way
immunity is waning on the endemic level has been the subject of a recent
study \citep{khalifi2022extending}.

The simplest epidemic models consider the
spread of a disease in a population made of identical individuals, that
are mixing homogeneously: individuals are equally susceptible to the
disease, equally infectious once infected, and contacts are equally
likely to occur between any pair of individuals in the population
\citep[Part I]{britton2019stochastic}. 
Many works have studied the epidemiological consequences of relaxing
these assumptions, to account for some of the heterogeneity which is
observed in human populations \citep{britton2020mathematical,
brauer2008epidemic, magal2016final, andreasen2011final, david2018epidemic}. 
In a similar way, we have added some heterogeneity to our model in
Section~\ref{S:twoGroups} by assuming that the population is subdivided
into a finite number of groups, that contacts between groups are
heterogeneous and that individuals in different groups get vaccinated
according to different distributions of $T_V$. Since our focus is the
impact of inhomogeneous vaccination on endemicity, we have assumed that
all groups have the same activity level ($\Gamma \times \diag(p)$ is a
multiple of a stochastic matrix), and that they sample their
infectiousness and susceptibility curves from the same distribution. Our
model could be easily extended to allow the distribution of the
infectiousness and susceptibility curves to depend on the group, and to
general contact matrices $\Gamma$. Using the same heuristic arguments as
in Section~\ref{SS:endemicHeterogeneous}, we can derive a criterion for
the existence of an endemic equilibrium in terms of the leading
eigenvalue of a next-generation matrix similar to \eqref{eq:nextGenMatrix}. 
However, although it is possible to derive such an expression, the joint
effect of heterogeneous infectiousness, susceptibility, contact rates and
vaccination rates on this criterion is extremely complex, but
would be a very interesting avenue for future work.

Finally, following the tradition of classical epidemiology models, we
have considered the groups as being fixed during the course of the
epidemic. Although this assumption might be realistic if groups model
``physical'' heterogeneities (age classes on a short time-scale,
spatial locations), it becomes simplistic when groups model human behavior
(compliance to public health measures, vaccine hesitancy). In the latter
situation, the group to which an individual belongs can possibly change
and is influenced by many factors, including perceived risks of vaccine
adverse events, disease prevalence, or available information.
Modeling such effects appropriately is a challenging task that is the
focus of behavioral epidemiology \citep{Bauch2005, Donofrio2011,
Lupica2020, Manfredi2013}. Incorporating such effects in the
current model is an interesting avenue for future work.

It is interesting to compare our expression for the endemic threshold to
that recently obtained in \cite{forien22}, for a similar model but
without vaccination. In \cite{forien22} it is shown that, in the absence
of vaccination and using our notation, an endemic equilibrium exists if
and only if
\begin{equation} \label{eq:endemicThem}
    \frac{1}{R_0} \E\Big[ \frac{1}{\sigma_*} \Big] < 1,
\end{equation}
where $\sigma_* \coloneqq \lim_{a \to \infty} \sigma(a)$. The corresponding
expression with vaccination that we have obtained is
\begin{equation} \label{eq:endemicVaccine}
    \frac{\E[T_V]}{R_0 \E\Big[ \int_0^{T_V} \sigma(a) \diff a \Big]} < 1.
\end{equation}
By letting $T_V \to \infty$, we expect that our model converges to the
model considered in \cite{forien22} where no vaccination is taken into
account. However, in the limit $T_V \to \infty$ our expression for
the endemicity criterion becomes
\begin{equation} \label{eq:endemicUs}
    \frac{1}{R_0 \E[\sigma_*]} < 1.
\end{equation}
Note the surprising discrepancy between \eqref{eq:endemicThem} and
\eqref{eq:endemicUs}. This apparent contradiction can be resolved by
noting that both expressions are specific cases of a more general
formula. Let $\zeta(u)$ denote the susceptibility of a typical individual
at age-of-infection $u$, that is, $u$ unit of time after its last
infection, regardless of how many times it has been vaccinated since then.
Then, mimicking the computations of Section~\ref{S:equilibria} would
suggest that the correct threshold for the existence of an endemic
equilibrium is given by
\begin{equation} \label{eq:endemicGeneral}
    \frac{1}{R_0} \E\Big[ \lim_{a \to \infty} \frac{1}{\frac{1}{a}\int_0^a \zeta(u) \diff u}
    \Big] < 1,
\end{equation}
provided that the limit in the expectation exists. In the absence of
vaccination, $\zeta(u) = \sigma(u)$ and we recover \eqref{eq:endemicThem}. 
In the presence of vaccination, letting $\sigma_i$ and $T_i$ be i.i.d.,
we have 
\[
    \forall u \ge 0,\quad \zeta(u) = \sigma_i(u - (T_1+\dots+T_i))
    \indic_{\{T_1 + \dots + T_i \le u < T_1 + \dots + T_{i+1}\}}
\]
and classical results on renewal processes show that we recover
\eqref{eq:endemicUs}. We believe that \eqref{eq:endemicGeneral} should
give the correct threshold for the existence of an endemic equilibrium for
a broader class of models.

% Sets up a bookmark for the acknowledgments
\bookmarksetup{startatroot}
\belowpdfbookmark{Acknowledgements}{Acknowledgements}

\subsection*{Acknowledgements}

We would like to thank two reviewers for taking the time and effort
necessary to review the manuscript. We sincerely appreciate all valuable
comments and suggestions, which helped us to improve the quality of the
manuscript.

\medskip
\noindent
The project was sponsored by Mathematics for Public Health (MfPH)
initiative, in Canada, involving the Fields Institute, the Atlantic
Association for Research in Mathematical Sciences (AARMS), the Centre de
Recherches Mathématiques (CRM), and the Pacific Institute for
Mathematical Sciences (PIMS). FFR acknowledges financial support from
MfPH, the AXA Research Fund, the Glasstone Research Fellowship, and
thanks Magdalen College Oxford for a senior Demyship. AC acknowledges
Canada's Natural Sciences and Engineering Research Council (NSERC) for
funding (RGPIN-2019-07077), the AXA Research Fund and the SCOR Foundation for Research. HG acknowledges
Canada's Natural Sciences and Engineering Research Council (NSERC) for
funding (RGPIN-2020-07239).

\subsection*{Data availability statement}
Data sharing not applicable to this article as no datasets were generated
or analysed during the current study.

\appendix

\section{Numerical simulations}\label{sec:numerical:simulations}

\subsection{Model parametrization}\label{sec:model:param}

In all simulations, we assumed that the laws 
$\mathcal{L}_\lambda$, $\mathcal{L}_\sigma$ and $\mathcal{L}_V$ 
have the following form.

The law $\mathcal{L}_\lambda$ depends on four parameters: two shape
parameters $\kappa_I, \kappa'_I$, one scale parameter $\theta_I$ and one
parameter $R_0$ for the total mass. Then, we define two independent random
variables
\[
    T_I \sim \mathrm{Gamma}(\kappa_I + \kappa'_I, \theta_I) \coloneqq
    \frac{x^{\kappa_I+\kappa'_I-1}\,e^{-x/\theta_I}}{\Gamma(\kappa_I+\kappa'_I)
    \theta_I^{\kappa_I+\kappa'_I}} \diff x, 
    \qquad 
    M \sim \mathrm{Exponential}(R_0) \coloneqq \frac{e^{-x/R_0}}{R_0} \diff x
\]
and we set
\[
    \forall a < T_I,\quad \lambda(a) = 
    \frac{M}{T_I \mathrm{B}(\kappa_I,\kappa'_I)} 
    \big(\tfrac{a}{T_I}\big)^{\kappa_I-1} \big(1-\tfrac{a}{T_I}\big)^{\kappa'_I-1}
\]
where $\mathrm{B}((\kappa, \kappa')$ is the Beta function. In words, $\lambda$
has the shape of the density of a Beta distribution, stretched by a
random factor $T_I$, and with random total mass $M$. Note that we do have
\[
    \E\Big[ \int_0^{T_I} \lambda(a) \diff a \Big] = R_0.
\]
This choice of parametrization was chosen so that $\E[\lambda(a)]$ is
easy to compute by standard properties of Beta and Gamma distributions.

\begin{table}
    \centering
    \def\arraystretch{1.2}
    \begin{tabular}{c c c}
        \toprule
        Parameter & Description & Value \\

        \midrule
        
        $\kappa_I, \kappa_I'$ & Shape parameters of $\lambda$ & $3$ \\
        $\theta_I$ & Scale parameter of $\lambda$ & $2$ \\

        $\kappa_V$ & Shape parameter of $T_V$ & $4$ \\
        $\theta_V$ & Scale parameter of $T_V$ & $3$ \\

        $\kappa_\sigma$ & Shape parameter of $\sigma$ & $5$ \\
        $\theta_\sigma$ & Scale parameter of $\sigma$ & $2$ \\

        $h A^*$ & Maximal age & $150$ \\

        \bottomrule
    \end{tabular} 
    \caption{Default parameter values for the simulations.}
    \label{tab:parametersValue}
\end{table}

In all simulations, except those of
Figure~\ref{fig:variationEndFunction}, the susceptibility curve $\sigma$
is deterministic and depends on two parameters: a shape parameter
$\kappa_\sigma$ and a scale parameter $\theta_\sigma$. Let $G(x; \kappa,
\theta)$ be the cumulative distribution function of a
$\mathrm{Gamma}(\kappa, \theta)$ random variable, evaluated at $x$, that
is,
\[
    \forall x \ge 0,\quad G(x; \kappa, \theta) = \P(X \le x),
    \qquad 
    X \sim \mathrm{Gamma}(\kappa, \theta).
\]
We assume that 
\[
    \forall a \ge 0,\quad \sigma(a) = G(a; \kappa_\sigma, \theta_\sigma).
\]
In the simulations of Figure~\ref{fig:variationEndFunction} only, we
assume that $\sigma$ is stochastic. More precisely, we obtain a random
$\sigma$ by letting the parameters of the Gamma distribution be
themselves random (and Gamma distributed with mean $\kappa_\sigma$ and
$\theta_\sigma$), that is,
\begin{equation} \label{eq:suscep_random}
    \forall a \ge 0,\quad \sigma(a) = G(a; K_\sigma, \Theta_\sigma),
    \qquad K_\sigma \sim \mathrm{Gamma}(2,
    \tfrac{\kappa_\sigma}{2}),
    \qquad \Theta_\sigma \sim \mathrm{Gamma}(2,
    \tfrac{\theta_\sigma}{2}).
\end{equation}

The law of the vaccination duration $T_V$ also depends on one shape
parameter $\kappa_V$ and one scale parameter $\theta_V$. We simply assume
that $T_V$ follows a Gamma distribution with these parameters:
\[
    T_V \sim \mathrm{Gamma}(\kappa_V, \theta_V).
\]
For the heterogeneous model, we use a similar parametrization. We suppose
that for each $i \in \{1, \dots, L\}$ we have
\[
    T_i \sim \mathrm{Gamma}(\kappa_i, \theta_i)
\]
where $\kappa_i, \theta_i$ are the shape and scale parameters of $T_i$.

\begin{figure}%[!h]
    \centering
    \includegraphics[width=\textwidth]{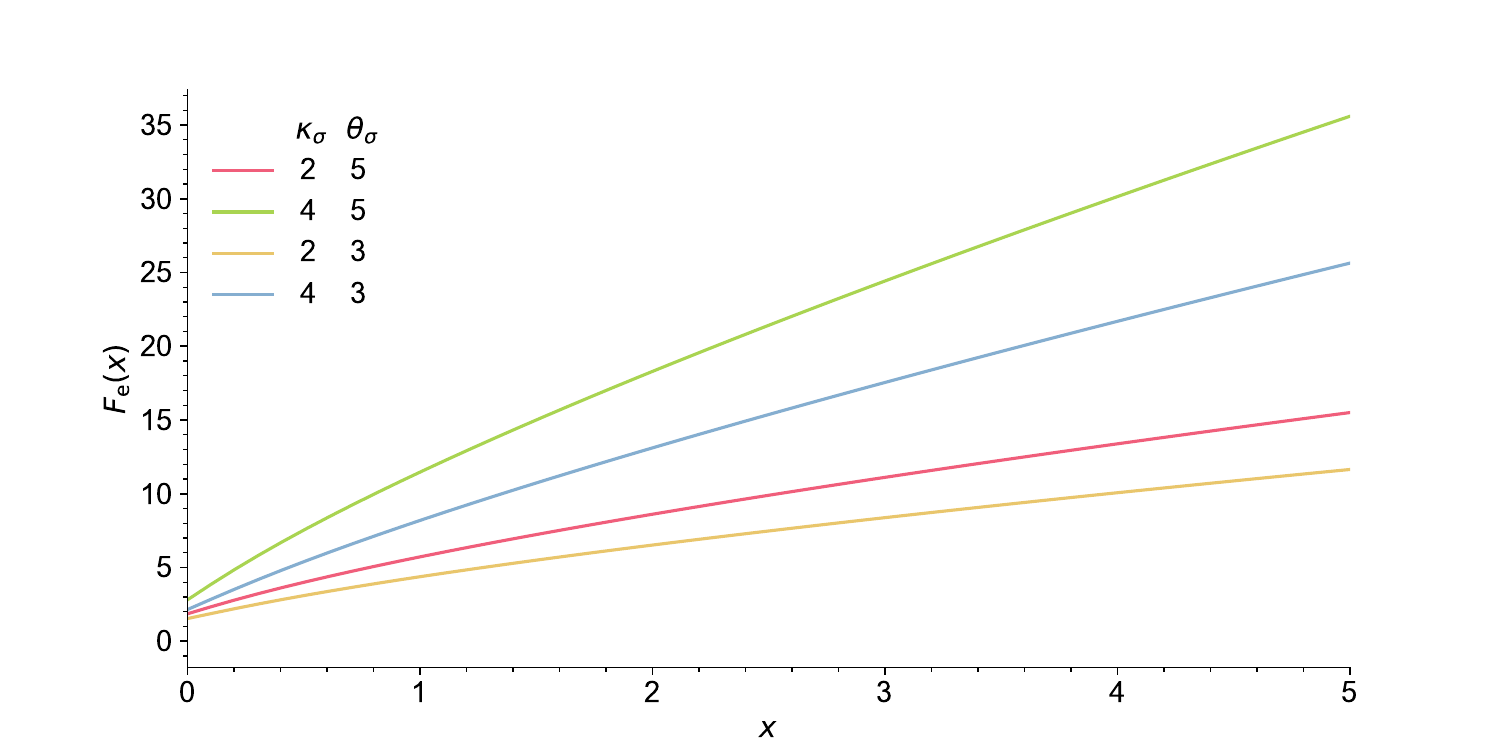}
    \caption{Numerical approximation of the function $F_\e$ in
    \eqref{eq:endemicity}. The susceptibility $\sigma$ is given by
    \eqref{eq:suscep_random}, where the parameters $\kappa_\sigma$ and
    $\theta_\sigma$ are given in the legend. The parameters of $T_V$ are
    given in Table~\ref{tab:parametersValue} (Appendix~\ref{sec:numerical:simulations}), and we assumed $\E[T_I] = 0$.} 
    \label{fig:variationEndFunction}
\end{figure}

\subsection{Initial condition}

We selected an initial conditions close to the endemic equilibrium to
speed up the convergence to it, and make the transient behavior of the
system shorter. Precisely, for all simulations we used
\[
    \forall a \ge 0,\qquad 
    h_I(a) = \exp\Big( - \int_0^a \mu_I(u) \diff u \Big) / \E[T_I],
    \qquad
    h_S(a) = \exp\Big( - \int_0^a \mu_V(u) \diff u \Big) / \E[T_V].
\]

For the heterogeneous model, we set the initial age structure in each
group as above, using the vaccination time distribution $T_i$ of the
corresponding group for $h_S$.

\begin{table}
    \centering
    \def\arraystretch{1.2}
    \begin{tabular}{c c c}
        \toprule
        Parameter & Description & Value \\

        \midrule

        $\kappa_1$, $\kappa_2$, $\kappa_3$ & Shape parameters of $T_1$, $T_2$, $T_3$ & $10$, $5$, $6$ \\

        $\theta_1$, $\theta_2$, $\theta_3$ & Scale parameters of $T_1$, $T_2$, $T_3$ & $2$, $4$, $1.5$ \\

        $p_1$, $p_2$, $p_3$ & Proportions of individuals in each groups & $0.1$, $0.3$, $0.6$ \\
        
        $\Gamma$ & Contact rates matrix &  
        $\begin{pmatrix}
            1.5 & 1 & 0.2 \\
            1 & 2 & 1.5 \\
            0.2 & 0.5 & 3
        \end{pmatrix}$ \\

        \bottomrule
    \end{tabular} 
    \caption{Additional parameter values for the simulations of the heterogeneous
    model.}
    \label{tab:parametersValueHeterogeneous}
\end{table}

\subsection{Numerical approximations of the PDE}\label{sec:apprendix:num:approx:pde}

We approximate the solution $(S(t,a), I(t,a);\, t,a \ge 0)$ of the PDE
\eqref{eq:main} on a finite lattice $\mathcal{G} = \{ (\eta i, \eta k);\, i \in
\{0, \dots, T^*\}, k \in \{0,\dots, A^*\} \}$ with a small discretization
step $\eta > 0$, using the method of characteristics.

The approximation is a vector $(I_{i,k}, S_{i,k};\, (i,k) \in \mathcal{G})$ defined
inductively as follows. We let the time boundary condition be
\[
    \forall k \in \{0,\dots, K^*\},\qquad
    I_{0,k} = I_0 \frac{h_I(\eta k)}{\eta \sum_{k=0}^{K^*} h_I(\eta k)},
    \qquad
    S_{0,k} = (1-I_0) \frac{h_S(\eta k)}{\eta \sum_{k=0}^{K^*} h_S(\eta k)},
\]
Moreover, we define the approximation of the force of infection as
\[
    \forall i \le T^*,\quad \Lambda_i = \eta \sum_{k=0}^{A^*} \E[\lambda(\eta k)] I_{i,k}.
\]
For the age boundary condition we let for $i \in \{0, \dots, T^*-1\}$
\[
    I_{i+1, 0} = \Lambda_i \cdot \Big( \eta \sum_{k=0}^{A^*} \sigma(\eta k) S_{i,k} \Big),
    \qquad
    S_{i+1, 0} = \eta \sum_{k=0}^{A^*} \big( \mu_I(\eta k) I_{i,k} + \mu_V(\eta k) S_{i,k} \big).
\]
Finally, for $i \in \{0, \dots T^*-1\}$ and $k \in \{0, \dots, K^*-1\}$ we define
\begin{gather*}
    I_{i+1,k+1} = I_{i, k} \big( 1 - \eta \mu_I(\eta k) \big) +
    \indic_{\{k+1 = K^*\}} I_{i, K^*} \big( 1 - \eta \mu_I(\eta K^*) \big) \\
    S_{i+1,k+1} = S_{i, k} \big( 1 - \eta ( \mu_V(\eta k) + \Lambda_i \sigma(\eta k) ) \big)
    + \indic_{\{k+1 = K^*\}} S_{i, K^*} \big( 1 - \eta ( \mu_V(\eta K^*) + \Lambda_i \sigma(\eta K^*) ) \big).
\end{gather*}
Note that the second term in each expression corresponds to a reflecting
age boundary at $k = K^*$.

The solution of the PDE \eqref{eq:mainHeterogeneous} for heterogeneous contacts is
approximated using a straightforward adaptation of the above scheme.

\bibliographystyle{plainnat}
\bibliography{biblio.bib}

\end{document}